\definecolor{defblue}{rgb}{0.121,0.47,0.705}
\definecolor{linkblue}{rgb}{0.098,0.098,0.4392}
\let\emph\relax
\DeclareTextFontCommand{\emph}{\color{defblue}\em}
\DeclareTextFontCommand{\bl}{\color{defblue}}
\newcommand{\true}{\text{true}}
\title{The $st$-Planar Edge Completion Problem is Fixed-Parameter Tractable}
   \author{Liana Khazaliya}{Technische Universit\"at Wien, Austria}{lkhazaliya@ac.tuwien.ac.at}{https://orcid.org/0009-0002-3012-7240}{}
  \author{Philipp Kindermann}{FB IV - Informatikwissenschaften, Universität Trier, Germany}{kindermann@uni-trier.de}{https://orcid.org/0000-0001-5764-7719}{}
 \author{Giuseppe Liotta}{Department of Engineering, University of Perugia, Italy}{giuseppe.liotta@unipg.it}{https://orcid.org/0000-0002-2886-9694}{}
  \author{Fabrizio Montecchiani}{Department of Engineering, University of Perugia, Italy}{fabrizio.montecchiani@unipg.it}{https://orcid.org/0000-0002-0543-8912}{}
 \author{Kirill Simonov}{Hasso Plattner Institute, University of Potsdam, Germany}{kirill.simonov@hpi.de}{https://orcid.org/0000-0001-9436-7310}{}
 \authorrunning{L. Khazaliya, P. Kindermann, G. Liotta, F. Montecchiani, K. Simonov} 
\keywords{$st$-planar graphs, parameterized complexity, upward planarity} 
\newcommand{\skel}{\mathrm{skel}\xspace}
\newcommand{\stpga}{\textsc{$st$-Planar Edge Completion}\xspace}
\newcommand{\stshort}{\textsc{$st$-PEC}\xspace}
\newcommand{\defparquestion}[4]{
	\vspace{2mm}
	\noindent\fbox{
		\begin{minipage}{0.96\linewidth}
			\begin{tabular*}{\linewidth}{@{\extracolsep{\fill}}lr} #1 & \\ \end{tabular*}
			{\bf{Input:}} #2 \\
                {\bf{Parameter:}} #3 \\
			{\bf{Question:}} #4
		\end{minipage}
	}
	\vspace{2mm}
}
\begin{document}

\maketitle

\begin{abstract}
The problem of deciding whether a biconnected planar digraph $G=(V,E)$ can be augmented to become an $st$-planar graph by adding a set of oriented edges $E' \subseteq V \times V$  is known to be NP-complete. We show that the problem is fixed-parameter tractable when parameterized by the size of the set $E'$.    
\end{abstract}

\section{Introduction}
Edge modification problems have long been a subject of investigation in graph algorithms, resulting in a vast body of literature dedicated to exploring their computational complexity (refer, for instance, to Burzyn et al.~\cite{DBLP:journals/dam/BurzynBD06} and to Natanzon et al.~\cite{DBLP:journals/dam/NatanzonSS01} for comprehensive surveys). One specific category within this realm is the family of edge completion problems, which can be succinctly described as follows: Given a graph $G=(V,E)$ and a graph family $\mathcal{G}$, the objective is to determine whether it is possible to augment $G$ with a set $E' \subseteq V \times V$ of edges such that $G'=(V,E \cup E') \in \mathcal{G}$. In such cases, we say that $G$ becomes a member of $\mathcal{G}$ by \emph{adding} the edges in $E'$. Edge completion problems are frequently known to be NP-hard, thereby inspiring numerous studies focusing on parameterized complexity. For a comprehensive examination of parameterized algorithms addressing edge completion problems, we point the reader to the exhaustive survey by Crespelle et al.~\cite{DBLP:journals/csr/CrespelleDFG23}.

This paper focuses on the investigation of an edge completion problem specifically applied to directed graphs (\emph{digraphs} for short). More precisely, let $G=(V,E)$ be a digraph. A vertex of $G$ with no incoming edges is a \emph{source} of $G$, while a vertex without outgoing edges is a \emph{sink} of $G$. A digraph $G$ is an \emph{$st$-planar graph} if it admits a planar embedding such that: (1) it contains no directed cycle; (2) it contains a single source vertex $s$ and a single sink vertex~$t$; (3) $s$ and $t$ both belong to the external face of the planar embedding.

Upward planarity is a rather natural and well-studied notion of planarity for directed graphs (see, e.g.,~\cite{DBLP:conf/compgeom/ChaplickGFGRS22,DBLP:conf/gd/ChaplickGFGRS22,DBLP:books/ph/BattistaETT99,DBLP:journals/tcs/BattistaT88,DBLP:journals/siamcomp/GargT01,DBLP:journals/jct/TrotterM77}). In particular, a planar digraph is \emph{upward} if it admits a planar drawing where all edges are oriented upward.  A well-known result in graph drawing states that a digraph $G$ is upward if and only if  $G$ is a subgraph of an $st$-planar graph~\cite{DBLP:books/ph/BattistaETT99,DBLP:journals/tcs/BattistaT88}\footnote{From the proof in Lemma 4.1 of~\cite{DBLP:journals/tcs/BattistaT88}, one can in fact observe that a digraph is upward planar if and only if it is a subgraph of an $st$-planar graph defined over the same set of vertices.}. However, since testing for upward planarity is an NP-complete problem already for biconnected graphs~\cite{DBLP:journals/siamcomp/GargT01}, determining whether a biconnected graph is a subgraph of an $st$-planar graph is also computationally challenging. On the other hand, checking whether a digraph is $st$-planar can be done efficiently in polynomial time. This observation motivates for the investigation of the following problem.

\defparquestion{\stpga (\stshort)}{A biconnected digraph $G$}{$k \in \mathbb{N}$}{Is it possible to add at most $k$ edges to $G$ such that the resulting graph is an $st$-planar graph?}

\begin{figure}[t]
    \centering
    \subcaptionbox{}{\includegraphics[page=1]{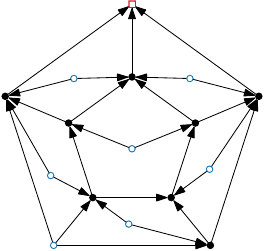}}
    \hfill
    \subcaptionbox{}{\includegraphics[page=2]{sources}}
    \hfill
    \subcaptionbox{}{\includegraphics[page=3]{sources}}
    \caption{(a) A digraph $G$ with $2k+1=7$ sources and 1 sink; $G$ has a unique planar embedding up to the choice of the external face; (b) A completion of $G$ to an $st$-planar graph obtained by adding $2k=6$ edges; (c) An upward planar drawing of the completion of $G$.}
    \label{fig:sources}
\end{figure}

In this paper, we present a fixed-parameter tractable algorithm for the \stpga problem. To help understanding the combinatorial and algorithmic challenges behind the problem, we make the observation that the parameter $k$ provides an upper bound on the number of sources and sinks in the input digraph $G$. Since an edge can remove the presence of at most one source and one sink, if the total number of sources and sinks in $G$ exceeds $2k + 2$, we can promptly reject the instance. Conversely, a positive answer to \stpga implies that $G$ is upward planar. In this respect, it is worth mentioning that Chaplick et al.~\cite{DBLP:conf/compgeom/ChaplickGFGRS22} have previously demonstrated that testing a digraph for upward planarity is fixed-parameter tractable when parameterized by the number of its sources. However, for every $k\ge 1$, there are upward planar digraphs with at most $2k+1$ sources that cannot be augmented to an $st$-planar graph by adding $k$ edges; refer to \cref{fig:sources} for an illustration. Furthermore, while an upward planarity test halts upon finding an upward planar embedding, not all upward planar embeddings of the same digraph can lead to an $st$-planar graph after the addition of $k$ edges. \cref{fig:free-embedding} demonstrates an upward planar digraph along with two of its upward planar embeddings: the embedding in \cref{fig:free-embedding-1} requires 6 edges to be augmented into an $st$-planar digraph, whereas the embedding in \cref{fig:free-embedding-3} can be augmented with 3 edges.

\begin{figure}[t]
    \centering
    \subcaptionbox{\label{fig:free-embedding-1}}{\includegraphics[page=1]{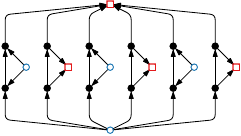}}
    \hfill
    \subcaptionbox{\label{fig:free-embedding-2}}{\includegraphics[page=3]{free-embedding}}
    \hfill
    \subcaptionbox{\label{fig:free-embedding-3}}{\includegraphics[page=6]{free-embedding}}
    \caption{(a) A biconnected digraph $G$ with $4$ sources and $4$ sinks; (b) With the given embedding, 6 edges have to be added to complete $G$ to an $st$-planar graph; (c) With a different embedding, adding 3 edges is sufficient.}
    \label{fig:free-embedding}
\end{figure}

In order to overcome the above technical challenges, our result is based on a structural decomposition of the digraph into its triconnected components using SPQR-trees (similarly as done in~\cite{DBLP:conf/compgeom/ChaplickGFGRS22}), as well as on novel insights regarding the combinatorial properties of upward planar digraphs. Since the proof is rather technical, after giving preliminaries and basic notation in \cref{se:preliminaries}, we present an overview of the approach in \cref{se:overview}. Next, the FPT algorithm is described in full detail in  \cref{se:main}. We conclude in \cref{se:open}.

\section{Preliminaries}\label{se:preliminaries}

In this section, we provide basic definitions and tools that will be used throughout the paper. 

\smallskip\noindent\textbf{Planar drawings and embeddings.} A \emph{planar drawing} of a graph $G$ maps the vertices of $G$ to points of the plane and the edges of $G$ to Jordan arcs such that no two arcs share a point except at common end-vertices. A planar drawing partitions the plane into topologically connected regions called \emph{faces}, one of which is unbounded and called the \emph{external face}, in contrast with all other faces which are \emph{inner faces}. For a digraph $G$, a planar drawing is called \emph{upward} if each edge oriented from a vertex $u$ to a vertex $v$ is represented by a Jordan arc monotonically increasing from the point representing $u$ to the point representing $v$.  A graph (digraph) is \emph{planar} (\emph{upward planar}) if it admits a planar drawing (upward planar drawing). A  \emph{planar embedding} (\emph{upward planar embedding}) $\mathcal{E}(G)$ of a planar graph (upward planar digraph) $G$ represents an equivalence class of planar drawings (upward planar drawings) with the same inner faces and the same external face, up to a homeomorphism of the plane. Graph $G$ is \emph{plane} if it comes with a fixed planar embedding $\mathcal{E}(G)$.

\smallskip\noindent\textbf{SPQR-trees.} We recall the definition of  \emph{SPQR-tree}, introduced in~\cite{DBLP:books/ph/BattistaETT99}, which represents the decomposition of a biconnected graph $G$ into its triconnected components~\cite{DBLP:journals/siamcomp/HopcroftT73}.
Each triconnected component corresponds to a non-leaf node $\nu$ of~$T$; the triconnected component itself is called the \emph{skeleton} of $\nu$ and is denoted as $\skel(\nu)$. Node $\nu$ can be: $(i)$ an \emph{R-node}, if $\skel(\nu)$ is a triconnected graph; $(ii)$ an \emph{S-node}, if $\skel(\nu)$ is a simple cycle of length at least three; $(iii)$ a \emph{P-node}, if $\skel(\nu)$ is a bundle of at least three parallel edges.
A degree-1 node of $T$ is a \emph{Q-node} and represents a single edge of~$G$.
A \emph{real edge} (resp. \emph{virtual edge})  in $\skel(\nu)$ corresponds to a Q-node (resp., to an S-, P-, or R-node) adjacent to $\nu$ in $T$.
Neither two S- nor two P-nodes are adjacent in~$T$. The SPQR-tree of a biconnected graph can be computed in linear time~\cite{DBLP:books/ph/BattistaETT99,DBLP:conf/gd/GutwengerM00}. Let $e$ be a designated edge of $G$, called the \emph{reference edge} of $G$, let $\rho$ be the Q-node of $T$ corresponding to $e$, and let $T$ be rooted at $\rho$. For any P-, S-, or R-node $\nu$ of $T$, $\skel(\nu)$ has a virtual edge, called \emph{reference edge} of $\nu$ and denoted as $e_\nu$, associated with a virtual edge in the skeleton of its parent. The end-vertices of the reference edge of $\nu$ are called the \emph{poles} of $\nu$. For every node $\nu \neq \rho$, the \emph{pertinent graph $G_\nu$ of $\nu$} is the subgraph of $G$ whose edges correspond to the Q-nodes in the subtree of $T$ rooted at $\nu$.  Without loss of generality, we shall consider SQPR-trees where every S-node has exactly two children (see, e.g., \cite{DBLP:conf/compgeom/ChaplickGFGRS22,DBLP:journals/siamcomp/BattistaLV98,DBLP:conf/gd/DidimoKLO22}); this lifts the condition that two S-nodes cannot be adjacent in $T$.

\smallskip\noindent\textbf{Angles in upward drawings.} Let $G=(V,E)$ be a digraph. For each edge $(u,v) \in E$, we write $uv$ if $(u,v)$ is oriented from $u$ to $v$ in $G$, and we write $vu$ otherwise. A vertex $v$ is a \emph{switch} of $G$, if it is either a source or a sink, and it is a \emph{non-switch} otherwise. Recall that a digraph is upward planar if and only if it is a subgraph of an $st$-planar graph~\cite{DBLP:books/ph/BattistaETT99}. Hence, being upward planar is a necessary condition for YES-instances of \stpga. Consider now a biconnected plane digraph $G$. An \emph{angle} is an incidence between a vertex $v$ and a face $f$ of $G$. Let $\alpha$ be one such angle, and consider the two edges incident to $v$ that belong to the boundary of $f$. If such edges are one incoming and one outgoing, $\alpha$ is a \emph{non-switch angle}, while if the edges are both incoming or both outgoing, $\alpha$ is a \emph{switch angle}. Note that a switch angle in a face $f$ can be made by two edges that are incident to a non-switch vertex $v$: it is enough that the edges of $f$ incident to $v$ are both incoming or both outgoing. In this case, $v$ is a \emph{local switch} for face $f$. An \emph{angle assignment} is a labeling $\lambda$ of the angles of $G$ with labels $\{-1,0,+1\}$ (see, e.g.,~\cite{DBLP:journals/algorithmica/BertolazziBD02,DBLP:journals/algorithmica/BertolazziBLM94,DBLP:conf/gd/BinucciGLT21,DBLP:journals/siamdm/DidimoGL09}). In particular,  non-switch angles can only receive the label $0$, while switch angles can be labeled as either $-1$ or $+1$. The planar embedding of $G$ can be realized as an upward drawing if and only if there is an angle assignment such that: (i) each switch vertex has exactly one angle labeled $+1$; (ii) each non-switch vertex has exactly two angles labeled as $0$, while all the others are switch angles labeled  $-1$; (iii) the difference between the number of angles labeled  $+1$ and the number of angles labeled $-1$ along the boundary of each inner face  is $-2$; (iv) the difference between the number of angles labeled  $+1$ and the number of angles labeled  $-1$ along the boundary  of the external face is $+2$. Observe that property (ii) implies that each non-switch vertex forms exactly two non-switch angles. An angle assignment satisfying the above properties is called \emph{upward}. The restriction of an upward angle assignment to the angles of a single face $f$ is an \emph{upward angle assignment for $f$}.

\section{Overview of the Approach}\label{se:overview}

Let $G$ be a biconnected digraph. Since testing for planarity can be done in linear time, we shall assume that $G$ is planar. We begin by explaining two key ingredients for our algorithm, namely, the use of SPQR-trees to encode all the planar embeddings of $G$, and the use of upward angle assignments to incrementally saturate $G$. The main crux of our algorithm lies in blending these two ingredients together to design a dynamic program that solves the problem in FPT time. 

Let $T$ be a rooted SPQR-tree of a planar graph $G$ with reference edge $e$. The planar embeddings of $G$ in which the edge $e$ lies on the boundary of the external face can be obtained as follows (see, e.g.,~\cite{DBLP:books/ph/BattistaETT99}). For a P- or R-node $\nu$, denote by $\skel^-(\nu)$ the skeleton of $\nu$ without its reference edge.
If $\nu$ is a P-node, the embeddings of $\skel(\nu)$ are the different permutations of the edges of $\skel^-(\nu)$. If $\nu$ is an R-node, $\skel(\nu)$ has two possible embeddings, obtained by flipping $\skel^-(\nu)$ at its poles. No operations are needed at S- and Q-nodes.

Consider now an upward planar drawing $\Gamma$ of $G$ and hence assume that $G$ is plane. Let $\lambda$ be the upward angle assignment  \emph{induced} by $\Gamma$. Precisely, the switch angles that are larger (smaller) than $\pi$ in $\Gamma$ are labeled as $+1$ ($-1$), while the non-switch angles are labeled as $0$. Let $v$ be a source (sink) of $G$ and let $f$ be the face of $G$ in which $v$ makes its $+1$ angle. Let $u$ be a vertex of $f$ different from $v$.  We say that adding $uv$ ($vu$) to $G$ \emph{saturates} $v$, and that $uv$ ($vu$) is a \emph{saturating edge}. Namely, $v$ becomes a non-switch vertex in $G'=(V, E \cup \{uv\})$. Notably, $f$ is the only face in which an edge saturating $v$ can be added: one easily verifies that choosing any other face would lead to a non-upward angle assignment. 

Based on the previous reasoning, at high-level, the algorithm will exploit a bottom-up traversal of the SPQR-tree $T$ to explore the planar embeddings of $G$. For each visited node, it will keep track of the information related to the minimum number of edges required to saturate all switches that lie in the inner faces of the corresponding pertinent graph. The interface of a candidate solution is encoded in terms of ``signatures'' which, informally, are strings containing all  switches along the boundary of the external face of the pertinent graph that do not yet have any angle labeled as $+1$ and all vertices that must instead contribute with a $-1$ angle along the boundary. A running time bounded by a function of the budget $k$ is obtained by several crucial insights about how a bounded number of switches in the graph affects the possible signatures and limits the relevant embeddings to be considered. 

\section{An FPT Algorithm for \stpga}\label{se:main}

In this section, we describe our FPT algorithm, which leads to the following theorem.

\begin{theorem}\label{th:main-bic}
Let $G$ be an $n$-vertex biconnected plane digraph.  There is an algorithm that solves \stpga in $2^{O(k^2)} \cdot n^2$ time.
\end{theorem}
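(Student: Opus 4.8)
The plan is to turn the characterization of upward planarity via angle assignments into a dynamic program over the SPQR-tree $T$ of $G$ in which we simultaneously choose a planar embedding, an upward angle assignment, and a set of at most $k$ saturating edges.

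\textbf{Preprocessing.} First I would test planarity in linear time and reject if $G$ is non-planar. Since every added edge destroys at most one source and one sink, I would count the switches of $G$ and reject immediately if there are more than $2k+2$ of them; henceforth $G$ has $O(k)$ switches. Because a positive instance must be upward planar and upward planar digraphs are acyclic, I only need to search among upward planar embeddings, which the angle-assignment machinery from \cref{se:preliminaries} lets me handle purely combinatorially: maintaining a valid upward angle assignment at all times guarantees upward planarity and hence acyclicity of every intermediate graph, so directed cycles never have to be checked separately.

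\textbf{The dynamic program.} I would root $T$ at a reference edge and process it bottom-up. For each node $\nu$ I record, for every embedding of $\skel(\nu)$ (permutations at P-nodes, the two flips at R-nodes, nothing at S- and Q-nodes), a table indexed by a \emph{signature}: a bounded-length description of the external boundary of the pertinent graph $G_\nu$ that lists, in cyclic order, the switches that are not yet saturated inside $G_\nu$ (and hence must make their $+1$ angle in the face shared with the parent) together with the vertices required to contribute a $-1$ angle on that face. Each entry stores the minimum number of saturating edges needed to turn every switch strictly interior to $G_\nu$ into a non-switch while producing a valid upward angle assignment on all inner faces of $G_\nu$ consistent with the signature. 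The recurrence merges child signatures across the virtual edges, inserts saturating edges into the faces that become internal (recall that the $+1$ face of a switch is unique, so the placement of each saturating edge is forced once the embedding is fixed), and verifies the face-balance conditions (iii)--(iv). At the root I would additionally enforce that exactly one source and one sink survive, that both are incident to the external face, and that the external face satisfies condition (iv) with value $+2$.

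\textbf{Running time and the main obstacle.} The crux, and the place I expect most of the technical work, is proving that the number of distinct relevant signatures is $2^{O(k^2)}$ and that they form a correct DP interface. The leverage is that only $O(k)$ vertices are switches and at most $k$ saturating edges may be used, so all but a bounded number of the children of any node contribute nothing to the signature; in particular, at a P-node only the $O(k)$ ``nontrivial'' children need to be ordered, which bounds the permutations that must be examined, and at an R-node only the two flips matter. The delicate part is establishing that a signature captures exactly the information needed to combine a subtree with the rest of the graph---so that the local face conditions and the forced placement of saturating edges compose correctly without double counting---which requires arguing that nothing beyond the cyclic list of unsaturated boundary switches and mandatory $-1$ contributions can influence feasibility or cost. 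Multiplying the $2^{O(k^2)}$ signatures and embedding choices per node by the $O(n)$ nodes of $T$, with an extra linear factor to range over the choices of external face, yields the claimed $2^{O(k^2)}\cdot n^2$ bound.
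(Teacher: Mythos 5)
Your proposal follows essentially the same route as the paper: a bottom-up dynamic program over the SPQR-tree whose states are bounded-length boundary signatures recording the unsaturated switches and the vertices forced to contribute $-1$ angles, with the $O(k)$ bound on switches used to prune the irrelevant (empty-signature) children of P-nodes and an outer loop over the $O(n)$ choices of reference edge supplying the second factor of $n$. One small correction worth noting: once the embedding is fixed, only the \emph{face} of each saturating edge is forced, not the edge itself---choosing the second endpoints is a genuine per-face optimization (a single edge can saturate a source and a sink simultaneously), and it is this enumeration of candidate edge sets inside a face (\cref{le:saturating}), rather than the number of signatures (which the paper bounds by $2^{O(k)}$, not $2^{O(k^2)}$), that accounts for the $2^{O(k^2)}$ factor in the running time.
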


We begin by describing the records used by our dynamic program (\cref{sse:records}), which are used to encode the angles along the boundary of the external face of a pertinent graph. Next, we describe the algorithm (\cref{sse:algo}), which constructs such records while traversing bottom-up the SPQR-tree of the input graph.

\begin{figure}
    \centering
    \includegraphics{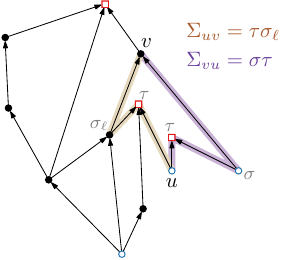}
    \caption{The signatures of two paths $\Pi_{uv}$ (brown background) and $\Pi_{vu}$ (purple background).}
    \label{fig:half}
\end{figure}

\subsection{Setting up the Records for Dynamic Programming}\label{sse:records}

\noindent\textbf{Signatures.} We begin with some notation and definitions. Let $G$ be a plane digraph. Let $\Pi_{uv}$ be a simple undirected path of $G$ from a vertex $u$ to a vertex $v$. The \emph{signature} of $\Pi_{uv}$ is a string $\Sigma_{uv}$ computed as follows. Consider a walk along $\Pi_{uv}$ from $u$ to $v$. For each encountered vertex $w$ distinct from $u$ and $v$, look at the two edges incident to $w$ in $\Pi_{uv}$. If the two edges are one incoming and one outgoing, we do not append any symbol to $\Sigma_{uv}$. If the two edges are both outgoing (incoming) and $w$ is a switch of $G$, we append the symbol~$\sigma$~($\tau$). If the two edges are both outgoing (incoming) and $w$ is not a switch of $G$  -- hence, it is a local switch for some face $f$ --, we append the symbol $\sigma_\ell$ ($\tau_\ell$). Observe that, if $\Pi_{uv}$ is a single edge connecting $u$ to $v$, then $\Sigma_{uv}=\emptyset$. At high level, the idea is that when walking along a piece of the boundary of some face $f$ of $G$,  non-switch angles are ignored as their only possible value in an angle assignment is $0$. On the other hand, the symbols $\sigma_\ell$ and $\tau_\ell$ will encode switch-angles whose only possible value is $-1$ (else the corresponding vertex would be a switch of $G$). Finally, the symbols $\sigma$ and $\tau$ will point to switch angles that may be assigned either $-1$ or $+1$. Refer to \cref{fig:half} for an illustration. 

A signature is \emph{short} if it contains at most $4k+2$ symbols.  Let $\Sigma^*$ be the set of short signatures; we observe the following.

  \begin{observation}\label{ob:card}
      The cardinality of $\Sigma^*$ is $2^{O(k)}$.
  \end{observation}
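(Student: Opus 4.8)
The plan is to bound the cardinality of $\Sigma^*$ by counting the number of strings of length at most $4k+2$ over a finite alphabet. First I would observe that, by the definition of a signature given above, every symbol appended to a signature belongs to the fixed four-letter alphabet $A = \{\sigma, \tau, \sigma_\ell, \tau_\ell\}$; non-switch vertices contribute no symbol, so these four symbols are the only ones that can ever occur. Hence a short signature is a string of length at most $4k+2$ over $A$, together with the empty string $\emptyset$ (which arises when $\Pi_{uv}$ is a single edge).

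The main step is then a routine counting argument. The number of strings of length exactly $i$ over an alphabet of size $4$ is $4^i$, so the total number of short signatures is at most
\begin{equation*}
\sum_{i=0}^{4k+2} 4^i = \frac{4^{4k+3}-1}{3} = 2^{O(k)}.
\end{equation*}
I would note that this already includes the empty signature in the $i=0$ term, so no separate accounting is needed. Since $4^{4k+3} = 2^{8k+6}$, the sum is clearly $2^{O(k)}$, which is exactly the claimed bound.

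I do not anticipate any real obstacle here, since the statement is a direct consequence of the length bound of $4k+2$ built into the definition of \emph{short} and the fact that the symbol alphabet has constant size. The only point worth stating carefully is that the alphabet is indeed finite and of size independent of $k$ and $n$ — this follows because each appended symbol is determined solely by whether the two path-edges at a vertex are both incoming or both outgoing, and whether the vertex is a switch or merely a local switch of $G$, giving the four possibilities listed above. With that established, the geometric-sum estimate completes the proof, and I expect the author's argument to be essentially this one-line counting computation.
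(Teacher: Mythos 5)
Your proof is correct and matches the paper's intent: the paper states this as an observation without an explicit proof, precisely because the intended justification is the routine counting you give — strings of length at most $4k+2$ over the constant-size alphabet $\{\sigma,\tau,\sigma_\ell,\tau_\ell\}$, bounded by a geometric sum of $2^{O(k)}$. Nothing further is needed.
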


\noindent\textbf{Half-boundaries.}  Let $G$ be a biconnected planar digraph, and let $T$ be the SPQR-tree of $G$ rooted at a Q-node representing an arbitrary edge $e$ of $G$. For each node $\nu$ of $T$, we recall that $G_\nu$ is the pertinent graph, and we denote by $u,v$ the poles of $\nu$ (omitting the dependency on $\nu$ for simplicity). Assuming that $G_\nu$ comes with a fixed planar embedding, let $f$ be the external face of $G_\nu$. The \emph{half-boundary} $B_{uv}$ of $\nu$ is the path containing the vertices of $f$ encountered in a clockwise walk of the face from $u$ to $v$. The half-boundary $B_{vu}$ of $\nu$ is defined analogously walking from $v$ to $u$. A vertex $w$ on the boundary of $f$ is \emph{bifacial} if it belongs to both $B_{uv}$ and $B_{vu}$ (which implies that $w$ is a cutvertex of $G_\nu$ and hence $\nu$ is an S-node). For each of the two half-boundaries we can define the two corresponding signatures $\Sigma_{uv}$ and $\Sigma_{vu}$. We will assume that for each symbol of $\Sigma_{uv}$ and $\Sigma_{vu}$ we have a pointer to the corresponding vertex. Let $B$ be one of the two half-boundaries of $\nu$ and let $\Sigma$ be its signature. Let $B'$ be a path contained in $B$ (possibly $B=B'$). The \emph{restriction} of $\Sigma$ to $B'$, denoted as $\Sigma[B']$, is the substring of $\Sigma$ containing the symbols whose corresponding vertices belong to $B'$. The next lemma shows that working with short signatures is not restrictive. 

\begin{lemma}\label{le:short}
    Let $G$ be a biconnected upward planar digraph with a fixed upward planar embedding $\mathcal{E}(G)$. Let $T$ be the SPQR-tree of $G$ rooted at a Q-node representing an arbitrary edge $e$ of $G$. Let $\nu$ be a node of $T$. For any fixed $k$, if $G$ can be augmented to an $st$-planar graph by adding at most $k$ saturating edges, then the signatures $\Sigma_{uv}$ and $\Sigma_{vu}$ of the two half-boundaries $B_{uv}$ and $B_{vu}$ of $\nu$ are both short. 
\end{lemma}
\begin{proof}
    Let $\Gamma$ be an upward planar drawing of $G$ whose corresponding upward planar embedding is $\mathcal{E}(G)$, and consider the subdrawing $\Gamma'$ induced by $G_\nu$. Let $\lambda$ be the upward angle assignment induced by $\Gamma'$, and let $f$ be the external face of $G_\nu$. We know that $f$ contains at most $2k+2$ switches, otherwise $k$ saturating edges would not suffice to turn $G$ into an $st$-planar graph. Hence, $\lambda$ can label $+1$ at most $2k+2$  angles along the boundary of $f$. Also, since $\lambda$ obeys to property (iv) of an upward angle assignment, it labels $-1$ at most $2k$ angles. Therefore,  $\Sigma_{uv}$ and $\Sigma_{vu}$ can each contain at most $4k+2$ symbols. 
\end{proof}

\smallskip\noindent\textbf{Internal assignments.}  An angle of $G_\nu$ is \emph{internal} if it is defined in an inner face of $G_\nu$. An \emph{internal assignment} of $G_\nu$ is an  angle assignment $\lambda$ that labels all the internal angles of $G_\nu$ and that respects properties (i)--(iii) for upward angle assignments (but ignoring property (iv)). A switch vertex of $G$ is called \emph{active} with respect to $\lambda$ if none if its internal angles (if any) received value $+1$. The \emph{cost} of an internal assignment $\lambda$ of $G_\nu$ is the minimum number of saturating edges needed to saturate all switches of $G_\nu$ that are not active with respect~to~$\lambda$.  

\smallskip\noindent\textbf{Partial solutions.}   We are now ready to define the table used by our dynamic program. A tuple $\langle \Sigma_1, \Sigma_2, b_1, b_2 \rangle$, such that $\Sigma_1, \Sigma_2$ is a pair of short signatures and $b_1,b_2$ is a pair of flags, is called a \emph{candidate tuple} in the following. Given a node $\nu$ and a candidate tuple $\langle \Sigma_1, \Sigma_2, b_1, b_2 \rangle$, the function $X(\nu,\Sigma_1,\Sigma_2,b_u,b_v)$ returns the minimum cost of an internal assignment $\lambda$ of $G_\nu$ such that: (1) $\Sigma_1$ and $\Sigma_2$ are the signatures of its two half-boundaries $B_{uv}$ and $B_{vu}$, respectively, (2) the flag $b_u$  is true if and only if $u$ is an active switch with respect to $\lambda$, (3) the flag $b_v$ is true if and only if $v$ is an active switch with respect to $\lambda$. The set of \emph{partial solutions} for $\nu$ is given by the restriction of $X$ to the single node $\nu$. Also, a pair of signatures is \emph{empty} if both its signatures are empty (i.e., they do not contain any symbol).

\subsection{Description of the Algorithm}\label{sse:algo}

The function $X$ is computed by traversing $T$ bottom-up. For each node $\nu$ of $T$, we initialize $X(\nu, \Sigma_1, \Sigma_2, b_1, b_2)=+\infty$ for each candidate tuple $\langle \Sigma_1, \Sigma_2, b_1, b_2 \rangle$. We only ensure that $X(\nu, \Sigma_1, \Sigma_2, b_1, b_2)$ is computed precisely if the value is at most $k$; for any value larger than $k$ we assume that $X(\nu, \Sigma_1, \Sigma_2, b_1, b_2)=+\infty$ is the correct setting, since we are only interested in the solutions that add at most $k$ edges.

If $\nu$ is a leaf node, then it is a Q-node and $G_\nu$ is a single edge. In this case, either $u$ is the  source and $v$ is the sink of $G_\nu$, or vice-versa. Then we set $X(\nu,\emptyset,\emptyset,\true,\true)=0$.

The lemma below deals with the case in which $\nu$ is an S-node. Since S-nodes have exactly two children and are not used to describe the planar embeddings of $G$, the routine of the algorithm at S-nodes is relatively simple. Next, we will consider P-nodes and R-nodes, which require more involved arguments. 

\begin{lemma}\label{le:series}
    Let $\nu$ be an S-node of $T$. The set of partial solutions of $\nu$ can computed in $2^{O(k)}$ time.
\end{lemma}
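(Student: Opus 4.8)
The plan is to compute the records of $\nu$ by combining, in all compatible ways, the already-computed partial solutions of its two children. Since $\nu$ is an S-node with exactly two children, its skeleton is a triangle: writing $u,v$ for the poles of $\nu$ and $w$ for the third vertex, one child $\nu_1$ has poles $u,w$ and the other child $\nu_2$ has poles $w,v$, and $G_\nu$ is their series composition obtained by identifying the shared pole $w$. The key structural facts I would use are that this composition creates no new inner face---so the inner faces of $G_\nu$ are exactly the disjoint union of the inner faces of $G_{\nu_1}$ and $G_{\nu_2}$---and that the two external faces merge, with $w$ becoming a cutvertex of $G_\nu$ (hence bifacial). Consequently an internal assignment $\lambda$ of $G_\nu$ is nothing but a pair $(\lambda_1,\lambda_2)$ of internal assignments of the children, and the half-boundaries concatenate as $B_{uv}=B_{uw}^{1}\cdot B_{wv}^{2}$ and $B_{vu}=B_{vw}^{2}\cdot B_{wu}^{1}$.

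From this decomposition the combination rules follow. For the signatures, the intermediate vertices of each half-boundary of $\nu$ are those of the children plus $w$, which is a pole (endpoint) in each child but an internal vertex of $\nu$'s half-boundaries; hence $\Sigma_{uv}=\Sigma_{uw}^{1}\cdot s_1\cdot\Sigma_{wv}^{2}$ and $\Sigma_{vu}=\Sigma_{vw}^{2}\cdot s_2\cdot\Sigma_{wu}^{1}$, where $s_1,s_2\in\{\varepsilon,\sigma,\tau,\sigma_\ell,\tau_\ell\}$ are the symbols newly contributed by $w$. Each $s_i$ depends only on the (fixed) orientations of the two children's edges incident to $w$ on the relevant half-boundary and on whether $w$ is a switch of $G$; since all of $w$'s incident edges in $G$ lie in $G_{\nu_1}\cup G_{\nu_2}$, this information is available from the children and $s_1,s_2$ are the same for every pair of records. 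For the flags I set $b_u=b_u^{(1)}$ and $b_v=b_v^{(2)}$, because $u$ occurs only in $\nu_1$ and $v$ only in $\nu_2$ and no new inner face is incident to them, so their activeness is inherited unchanged. Finally the cost is additive, $c_1+c_2$: the saturating edges counted by an internal assignment live in inner faces, and the inner faces of the two children are disjoint, so no saturating edge is shared or double counted.

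The hard part, and the only place where the two children genuinely interact, is the bookkeeping at the shared pole $w$. If $w$ is a switch of $G$, its internal angles are split between the two children, so in $\lambda$ it receives an internal $+1$ precisely when it does so in exactly one child; I would therefore declare a pair compatible only if $w$ is active in at least one child (i.e.\ $b_w^{(1)}\vee b_w^{(2)}$), since otherwise $w$ would carry two internal $+1$ angles and violate property~(i). When $w$ is active in both children it stays active in $\lambda$ and its $+1$ is deferred to the external face---consistently recorded by the symbols $s_1,s_2$, which for a switch are necessarily switch symbols---whereas if it is active in exactly one child it is saturated internally, a cost already folded into the corresponding $c_i$. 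If instead $w$ is a non-switch of $G$, its angle labels are forced (its two non-switch angles must be labeled $0$ and every local-switch angle $-1$), so there is no activeness constraint; here I only need to verify that the merged embedding endows $w$ with exactly two non-switch angles, rejecting the pair otherwise. Getting these switch/non-switch cases and the no-double-counting argument exactly right is where the care lies.

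For the running time, \cref{ob:card} gives $|\Sigma^*|=2^{O(k)}$, so each child has at most $|\Sigma^*|^2\cdot 4=2^{O(k)}$ candidate tuples, and there are $2^{O(k)}$ pairs to consider. Each pair requires only concatenating signatures of length $O(k)$ and performing the constant-size compatibility and $w$-bookkeeping checks, i.e.\ $\mathrm{poly}(k)$ work; pairs whose resulting signatures are not short are discarded, which is sound by \cref{le:short}. Taking, for every candidate tuple of $\nu$, the minimum cost over all pairs that produce it yields the complete set of partial solutions of $\nu$ in $2^{O(k)}$ time.
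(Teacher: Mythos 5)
Your overall scheme matches the paper's proof: enumerate pairs of candidate tuples of the two children, check the budget and the compatibility condition $b_w^{(1)}\vee b_w^{(2)}$ at the shared pole $w$, concatenate the half-boundary signatures with a possible new symbol for $w$, inherit the poles' flags as $b_u=b_u^{(1)}$ and $b_v=b_v^{(2)}$, add the costs, prune non-short signatures via \cref{le:short}, and conclude $2^{O(k)}$ time via \cref{ob:card}. However, there is one genuine gap, and it sits exactly at the point you flag as ``the hard part'': you assert that the symbols $s_1,s_2$ contributed by $w$ ``are the same for every pair of records,'' determined only by the edge orientations at $w$ and by whether $w$ is a switch of $G$. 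In the paper's routine the symbol for $w$ depends on the children's flags: if one of $b_{1,2},b_{2,1}$ is \emph{false} (i.e., $w$ already received its $+1$ angle in an inner face of one child and is saturated there), the paper appends $\sigma_\ell$ or $\tau_\ell$ even though $w$ is a switch of $G$; only when both flags are true does it append $\sigma$ or $\tau$. This is not a cosmetic difference. In the dynamic program, the signatures are not the literal signatures of the definition but are maintained relative to the internal assignment, with the invariant that $\sigma,\tau$ mark switch angles that may still receive $+1$, whereas $\sigma_\ell,\tau_\ell$ mark switch angles forced to $-1$ (this is exactly the replacement performed in \cref{le:signature}, and it is what \cref{le:parallel,le:rigid,le:fixedrefedge} rely on, e.g., the root routine forces every non-bifacial $\sigma$ or $\tau$ to contribute a $+1$).

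Under your fixed-symbol rule, a switch $w$ of $G$ that is active in exactly one child --- a pair you correctly declare compatible --- would carry the symbol $\sigma$ (or $\tau$) upward, yet its $+1$ is already spent internally. Since the tuple's flags cover only the poles $u,v$ and $w$ is interior to the half-boundaries of $\nu$, no other field of the record retains this information. Ancestor nodes would then either assign $w$ a second $+1$ (violating property (i), so an invalid assignment could be accepted and its saturation cost undercounted) or be forced by the invariant to place a $+1$ at $w$ where none is allowed (so valid solutions could be rejected); the $\pm 1$ counting checks for faces would likewise be computed against the wrong set of admissible labels. The fix is precisely the paper's rule: make $s_1,s_2$ a function of $b_{1,2}$ and $b_{2,1}$, demoting $\sigma,\tau$ to $\sigma_\ell,\tau_\ell$ when $w$ is already saturated. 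Your remaining bookkeeping (cost additivity over disjoint inner faces, the non-switch case for $w$, and the running-time analysis) is consistent with the paper.
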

\begin{proof}
    Let $\mu_1$ and $\mu_2$ be the two children of $\nu$.  In order to compute the partial solutions for $\nu$, we check whether pairs of internal assignments of $G_{\mu_1}$ and $G_{\mu_2}$ can be combined together. Let $\langle \Sigma_{1,1}, \Sigma_{1,2}, b_{1,1}, b_{1,2} \rangle$ and $\langle \Sigma_{2,1}, \Sigma_{2,2}, b_{2,1}, b_{2,2} \rangle$ be a pair of candidate tuples. Also, let $C=X(\mu_1, \Sigma_{1,1}, \Sigma_{1,2}, b_{1,1}, b_{1,2})+X(\mu_2, \Sigma_{2,1}, \Sigma_{2,2}, b_{2,1}, b_{2,2})$.   

    We first verify that $C \le k$, and that $b_{1,2} \vee b_{2,1} = \true$. The first condition guarantees that we have not exceeded our budget $k$ of saturating edges, while the second condition guarantees that the pole shared by $\mu_1$ and $\mu_2$ does not receive the value $+1$ twice in the final internal assignment of $G_\nu$. If both conditions are satisfied, then we proceed as detailed below, otherwise, we discard the pair of candidate tuples. 

    Denote by $u$ and $w$  the poles of $\mu_1$, and by $w$ and $v$ the poles of $\mu_2$. Observe that $B_{uv}$ corresponds to the union of $B_{uw}$ and $B_{wv}$ (vertex $w$ is hence bifacial). Based on this observation, we show how to compute $\Sigma_1$ for $B_{uv}$, the computation of $\Sigma_2$ can be performed analogously. We initially set $\Sigma_1 = \Sigma_{1,1}$. Consider the two edges incident to $w$ along $B_{uv}$. If one edge is incoming and the other is outgoing, then we do not append any symbol. If both edges are incoming or both outgoing, we check whether  one of $b_{1,2}$ and $b_{2,1}$ is false. If so, we append the symbol $\sigma_\ell$ if $w$ is a source of $G$, or the symbol $\tau_\ell$ otherwise. If none of $b_{1,2}$ and $b_{2,1}$ is false, we append the symbol $\sigma$ if $w$ is a source of $G$, or the symbol $\tau$ otherwise. Next, we concatenate the signature $\Sigma_{2,1}$. Once both $\Sigma_1$ and $\Sigma_2$ have been computed, we verify that each of them is short (a necessary condition by \cref{le:short}), otherwise we discard the candidate tuples. Finally,  we set $X(\nu, \Sigma_1, \Sigma_2, b_{1,1}, b_{2,2}) = C$. 
    
    By \cref{ob:card}, we have $2^{O(k)}$ possible pairs of signatures to consider, and performing the above operations takes $O(k)$ time for each pair.  
\end{proof}

The next tools will be useful for the remaining lemmas. The next result is based on the fact that face boundaries containing long sequences of non-switch vertices are irrelevant for the sake of computing the least number of saturating edges; see \cref{fig:saturating} for an illustration.

\begin{lemma}
    \label{le:saturating}
    Let $f$ be an inner face of $G$ with $n_f$ vertices, and let $\lambda_f$ be an upward angle assignment for $f$ with $h$ switch-angles. The minimum number of edges that saturate all switch vertices of $G$ forming an angle labeled $+1$ in $f$ can be computed in $O(2^{O(h^2)} + n_f)$~time. 
\end{lemma}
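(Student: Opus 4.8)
The plan is to show that the minimum number of saturating edges depends only on the cyclic sequence of switch-angles along the boundary of $f$, and then to compute it by brute force on a reduced instance whose size is controlled by $h$. First I would pin down the effect of a single saturating edge: to saturate a source $v$ (resp.\ sink $v$) forming a $+1$ angle in $f$, we insert inside $f$ an edge oriented into $v$ (resp.\ out of $v$), whose second endpoint $u$ lies on the boundary of $f$ and whose orientation is forced by upwardness. The crucial observation is that one edge can simultaneously saturate at most two $+1$ switches -- one source at its head and one sink at its tail -- while choosing the second endpoint to be a non-switch vertex saturates only the switch at the first endpoint. Hence, writing $p$ and $q$ for the numbers of sources and sinks forming a $+1$ angle in $f$, saturating all of them amounts to covering them by chords, where pairing a $+1$ source with a $+1$ sink via one chord saves an edge; the optimum is $p+q$ minus the largest number of such source-sink pairs that can be routed simultaneously without crossings and consistently with the upward angle assignment $\lambda_f$.

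Next I would eliminate the dependence on $n_f$. The $0$-labeled (non-switch) angles of $f$ occur in maximal runs of consecutive non-switch vertices, each forming a monotone directed subpath of the boundary; as these runs are separated by switch-angles, there are at most $h$ of them. The key step -- and the one I expect to be the main obstacle -- is an exchange argument showing that it is never advantageous to attach a saturating edge to an interior vertex of such a run: if an optimal set of saturating edges attaches to some interior non-switch vertex, that endpoint can be slid along the (monotone) run to a fixed representative, e.g.\ its topmost or bottommost vertex, without creating crossings, without destroying upward-routability of the chord, and without increasing the number of edges. Consequently the optimum is unchanged if we replace $f$ by the reduced face obtained by contracting each run to a single representative, which has $O(h)$ vertices. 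Reading $f$ and building this reduced face takes $O(n_f)$ time and accounts for the additive $O(n_f)$ term.

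On the reduced face all computation is in terms of $h$. I would enumerate every subset of the set of \emph{candidate} saturating chords, namely those whose endpoints are among the $O(h)$ retained vertices; there are $O(h^2)$ candidate chords and hence $2^{O(h^2)}$ subsets. For each subset I would check in $\mathrm{poly}(h)$ time that (a) the chords are pairwise non-crossing, (b) they admit an orientation consistent with $\lambda_f$ on the subfaces they create, so that the augmented face remains realizable as an upward drawing, and (c) together they saturate every $+1$ switch of $f$; among the valid subsets I would return one of minimum cardinality. The total running time is $2^{O(h^2)}\cdot \mathrm{poly}(h)+O(n_f)=2^{O(h^2)}+O(n_f)$, as claimed.

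Finally, I would verify both correctness directions: every subset passing checks (a)--(c) yields a feasible augmentation of $f$, and conversely every feasible augmentation can be transformed, via the exchange argument above, into one using only candidate chords of the reduced face, so it is found by the enumeration. The subtle point throughout is keeping upward-feasibility local to $f$: a chord from a $+1$ sink to a $+1$ source must respect their vertical order in the drawing, and this order is only partially determined by $\lambda_f$. For this reason I expect the cleanest route is to implement condition (b) by testing whether each subface independently admits an upward angle assignment, rather than committing to a global height order in advance; this is also why a brute-force enumeration over edge subsets, rather than a direct non-crossing matching, is the safe way to realize the $2^{O(h^2)}$ bound.
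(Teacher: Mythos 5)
Your proposal matches the paper's proof essentially step for step: the same exchange argument slides saturating edges along each monotone run of non-switch vertices to a single fixed representative, yielding a simplified boundary with $O(h)$ vertices computable in $O(n_f)$ time, followed by the same brute-force enumeration of the $2^{O(h^2)}$ subsets of the $\binom{O(h)}{2}$ candidate chords, checking saturation of all $+1$ switches and pairwise non-crossing (which depends only on the cyclic order of endpoints). The only deviations are cosmetic: your opening source--sink pairing observation is not actually used by the enumeration, and your extra per-subface upward-realizability check is a harmless refinement of the paper's test, which verifies only non-crossing and saturation and defers upward-feasibility to the surrounding angle-assignment framework.
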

\begin{proof}
    Consider a sequence of non-switch angles in a walk along the boundary of $f$, and let $F$ be the corresponding set of vertices. Also, consider a set of saturating edges $S$ drawn inside $f$, such that each of them has one end-vertex that belongs to $F$; see \cref{fig:app-saturating-1}. Since in any upward drawing of $G$ the set $F$ is drawn as a monotonically increasing curve, it is immediate to see that the set $S$ can be replaced with a set of saturating edges $S'$ such that: $S$ and $S'$ have the same size, $S$ and $S'$ saturate the same set of switches, all edges of $S'$ are incident to a single (arbitrarily chosen) vertex of $F$; see \cref{fig:app-saturating-2}. Consequently, we can work with a simplified boundary of $f$ in which maximal sequences of vertices forming non-switch angles are replaced with a single vertex; see \cref{fig:app-saturating-3}. Computing the simplified boundary takes $O(n_f)$ time, and such a boundary has at most $2h+1$ vertices. Thus, the maximum number of edges whose end-vertices belong to this boundary is $\binom{2h+1}{2}$, and in $2^{O(h^2)}$ time we can enumerate all candidate sets of saturating edges whose end-vertices belong to this boundary. Finally, among these sets, we return the size of the smallest set that saturates all switches labeled  $+1$ and such that no two of its edges cross. (Whether two edges cross only depends on the order of their end-vertices along the boundary of $f$.)
\end{proof}

\begin{figure}
    \centering
    \subcaptionbox{\label{fig:saturating-1}}{\includegraphics[page=1]{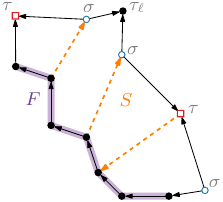}}
    \hfill
    \subcaptionbox{\label{fig:saturating-2}}{\includegraphics[page=2]{saturating}}
    \hfill
    \subcaptionbox{\label{fig:saturating-3}}{\includegraphics[page=3]{saturating}}
    \caption{Illustration for \cref{le:saturating}.}
    \label{fig:saturating}
\end{figure}

\begin{lemma}\label{le:signature}
    Let $\nu$ be a node of $T$ and let $\mu$ be a child of $\nu$. Suppose that $G_\nu$ is plane and a half-boundary $B$ of $\nu$ contains a half-boundary $B'$ of $\mu$ ($B$ and $B'$ may possibly coincide). Given an internal assignment $\lambda$ of $G_\nu$ and the signature of $B'$, the restriction of the signature of $B$ to  $B'$ can be computed in $O(k)$ time.
\end{lemma}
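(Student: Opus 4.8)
The plan is to prove \cref{le:signature} by showing that the signature of $B$, restricted to the subpath $B'$, can be reconstructed from the already-known signature of $B'$ by re-examining only the symbols that could have \emph{changed status} between the two contexts. The key observation is that the symbols $\sigma,\tau$ (switch of $G$) versus $\sigma_\ell,\tau_\ell$ (local switch, forced to receive $-1$) are determined by whether the relevant vertex is a switch of the \emph{whole} graph $G$, which does not depend on $\nu$ or $\mu$; and whether a vertex of $B'$ contributes a symbol at all is determined by the pair of edges incident to it along the path, which is \emph{identical} for an internal vertex of $B'$ whether we walk along $B'$ or along the larger path $B$. Hence the substring $\Sigma[B']$ agrees with the signature $\Sigma'$ of $B'$ on every vertex that is internal to $B'$, and the only vertices needing fresh inspection are the (at most two) endpoints of $B'$, which are internal to $B$ but were endpoints of $B'$ and thus contributed no symbol to $\Sigma'$.

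Concretely, I would first recall from the definition of half-boundaries that $B'$ is a contiguous subpath of $B$, so walking along $B$ traverses $B'$ as a consecutive block. For each internal vertex $w$ of $B'$, the two path-edges of $w$ along $B$ coincide with its two path-edges along $B'$, so its incoming/outgoing pattern, and therefore its emitted symbol (or lack thereof), is unchanged; this gives $\Sigma[B']$ on the interior for free from $\Sigma'$. Next I would handle the endpoints $u',v'$ of $B'$: in $\Sigma'$ these endpoints emit nothing by the signature convention (endpoints of a path are never assigned symbols), but as vertices of the interior of $B$ they may now have two path-edges within $B$ and thus emit a symbol. For each such endpoint I determine its two $B$-incident edges, check whether they are both incoming, both outgoing, or mixed, and—if they form a switch angle—decide between $\sigma/\tau$ and $\sigma_\ell/\tau_\ell$ by testing whether the vertex is a switch of $G$ (one constant-time lookup). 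This prepends or appends at most one symbol at each end of the interior block.

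For the running time, I would argue that the reconstruction copies the interior block of $\Sigma'$ and inspects a constant number of boundary vertices; since short signatures have $O(k)$ symbols by \cref{le:short}, copying and adjusting costs $O(k)$ time, and the per-endpoint switch test is $O(1)$. I expect the main subtlety—rather than a genuine obstacle—to be arguing carefully that at an endpoint $w$ of $B'$ the correct two edges to examine are precisely the two edges of $B$ incident to $w$, and that $\lambda$ enters only insofar as we must confirm the emitted symbol respects the internal assignment already fixed on $G_\nu$; since the symbol type is fully determined by edge orientations and by the global switch status of $w$, the internal assignment does not actually change which symbol is emitted, only whether the configuration is consistent, so no extra case analysis on $\lambda$ is needed. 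The remaining bookkeeping is routine, as the pointer from each symbol to its vertex (assumed available in the half-boundary setup) lets us locate the endpoint edges directly.
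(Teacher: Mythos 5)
There is a genuine gap in your argument, and it lies exactly where you dismiss $\lambda$: you claim that the symbol type ``is fully determined by edge orientations and by the global switch status of $w$,'' so that ``no extra case analysis on $\lambda$ is needed.'' Under that reading the lemma would be vacuous ($\lambda$ would not be an input at all), and indeed your reconstruction copies the interior of $\Sigma'$ verbatim. But in the dynamic program the symbols $\sigma,\tau$ versus $\sigma_\ell,\tau_\ell$ carry a semantic, not purely syntactic, meaning: $\sigma/\tau$ mark switch angles that may still be assigned $+1$, while $\sigma_\ell/\tau_\ell$ mark switch angles forced to take $-1$. This is visible already in the S-node routine (\cref{le:series}), where the shared pole $w$ emits $\sigma_\ell/\tau_\ell$ precisely when one of the child flags is false, i.e., when $w$ — although a switch of $G$ — has already received its $+1$ angle internally. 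The actual content of \cref{le:signature} is the corresponding update when passing from $\mu$ to $\nu$: a \emph{bifacial} vertex $w$ of $B'$ (a cutvertex of $G_\mu$, arising at S-node descendants) lies on both half-boundaries of $\mu$, and one of its two incident faces may be an inner face $f$ of $G_\nu$. If the internal assignment $\lambda$ labels the angle of $w$ in $f$ with $+1$, then along $B$ the remaining angle of $w$ can only be $-1$, so the symbol $\sigma$ (resp.\ $\tau$) of $\Sigma'$ must be downgraded to $\sigma_\ell$ (resp.\ $\tau_\ell$). The paper's proof is exactly this scan: over the $O(k)$ symbols of the short signature $\Sigma'$, for each $\sigma$ or $\tau$ at a bifacial vertex, check in $\lambda$ whether its $+1$ was consumed in an inner face of $G_\nu$, and replace the symbol if so. Your version, by propagating $\sigma/\tau$ unchanged, would advertise switches as still available for a $+1$ angle after that angle has been spent; at ancestors, \cref{le:parallel} and \cref{le:rigid} would then guess a second $+1$ for such a vertex or miscount the saturation cost computed via \cref{le:saturating}, breaking the correctness of the whole algorithm.

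A secondary, minor mismatch: you spend most of your effort on the endpoints of $B'$. As the lemma is actually used, those are handled by the caller, not inside the lemma — in \cref{le:rigid} the skeleton vertices $w_i$ (the poles of the children, i.e., the endpoints of each $B'$) get their symbols appended after \cref{le:signature} is invoked, and when $B=B'$ (as in \cref{le:parallel}) the endpoints are the poles of $\nu$, whose status is tracked by the flags $b_1,b_2$ rather than by signature symbols. So that part of your argument is out of scope rather than wrong, but it does not make up for the missing bifacial-vertex update, which is the lemma's entire point.
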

\begin{proof}
    Let $\Sigma'$ be the signature of $B'$, we compute the desired signature $\Sigma$ as follows.  If $\Sigma'$ does not contain any symbol in $\{\sigma,\tau\}$ whose corresponding vertex is bifacial, then $\Sigma=\Sigma'$. Otherwise we initialize $\Sigma=\Sigma'$ and proceed as follows. For each $\sigma$ or $\tau$ whose corresponding vertex $w$ is bifacial and incident to an inner face $f$ of $G_\nu$, we verify whether $\lambda$ has labeled $+1$ the angle that $w$ makes in $f$. If so, we replace the symbol $\sigma$ or $\tau$ with $\sigma_\ell$ or $\tau_\ell$, respectively. By construction, $\Sigma$ is the restriction of the signature of $B$ to $B'$.
\end{proof}

The next result will be used to bound the number of interesting children of a P-node.

\begin{figure}
    \centering
    \subcaptionbox{\label{fig:app-saturating-1}}{\includegraphics[page=1]{saturating}}
    \hfill
    \subcaptionbox{\label{fig:app-saturating-2}}{\includegraphics[page=2]{saturating}}
   \hfill
   \subcaptionbox{\label{fig:app-saturating-3}}{\includegraphics[page=3]{saturating}}
  \caption{Illustration for \cref{le:saturating}.}
   \label{fig:app-saturating}
\end{figure}

\begin{lemma}
    \label{le:empty}
    Let $\nu$ be a P-node of $T$ with poles $u,v$. Suppose that $G_\nu$ is plane, and let $\mu$ and $\mu'$ be two children of $\nu$ none of which is a Q-node, and whose corresponding edges of $\skel(\nu)^-$ are consecutive in the permutation fixed by the planar embedding of $G_\nu$. Also, suppose that for both $\mu$ and $\mu'$ it holds that the pair of signatures of its two half-boundaries is empty. Let $G'$ be the digraph obtained from $G$ by removing all vertices of $G_{\mu'}$ except the poles $u,v$. Then $G$ is a YES-instance of \stshort if and only if $G'$ is a YES-instance. 
\end{lemma}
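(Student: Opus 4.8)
The plan is to show that, as seen from the rest of the graph, an empty-signature non-Q child of a $P$-node behaves exactly like a single monotone $u$–$v$ bundle, and that the face wedged between two such consecutive children is \emph{inert}: it needs no saturating edge and supplies the $+1$ angle of neither pole. Granting this, removing $\mu'$ only merges that inert face with its neighbour on the far side of $\mu'$, leaving the angle combinatorics at $u$ and $v$ untouched, so feasibility cannot change. First I would unpack the hypothesis: an empty signature of a half-boundary means every internal vertex of that path has one incoming and one outgoing path-edge, i.e.\ is a transit vertex, so each half-boundary of $\mu$ and of $\mu'$ is a \emph{directed} path between the poles. Restricting to the relevant (upward planar, hence acyclic) case, the two half-boundaries of $\mu'$ cannot be oppositely oriented, as that would make the boundary of $G_{\mu'}$ a directed cycle; so they share an orientation, say both $u\to v$, and likewise for $\mu$.

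Next I would analyse the face $f^*$ between $\mu$ and $\mu'$, which exists because their skeleton edges are consecutive in the $P$-node permutation. Its boundary is one half-boundary of $\mu$ and one of $\mu'$, both directed $u\to v$, so every boundary vertex other than $u,v$ forms a non-switch angle in $f^*$, and the only switch angles of $f^*$ are the ``source'' angle at $u$ and the ``sink'' angle at $v$. By property (iii) for an inner face, the labels along the boundary of $f^*$ must sum to $-2$; with exactly these two switch angles this forces both to be labelled $-1$. Hence no vertex makes a $+1$ angle in $f^*$: by \cref{le:saturating} the face requires no saturating edge, and it provides the $+1$ angle of neither pole.

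Using this, for the forward direction I would take an $st$-planar completion of $G$ with a compatible upward embedding extending $\mathcal{E}(G_\nu)$, and first \emph{normalise} it so that each pole realises its unique $+1$ angle in a face that is not interior to $G_{\mu'}$ -- possible because $\mu'$ is only one of several parallel strands, so each pole is incident to faces outside it. Every switch interior to $G_{\mu'}$ makes its $+1$ angle in an interior face of $G_{\mu'}$ and is therefore saturated by an edge lying inside $G_{\mu'}$. I would then delete all interior vertices of $G_{\mu'}$ together with exactly these internal saturating edges; this merges $f^*$ with the face on the opposite side of $\mu'$. By the inert-face analysis the poles' labels inside $f^*$ were $-1$ and their $+1$ angles survive the deletion, so $u$ and $v$ keep exactly one $+1$ angle each and no new switch is created at the poles; acyclicity, planarity, and the single source and single sink are inherited from the supergraph. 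The result is an $st$-planar completion of $G'$ using no more added edges, so $G'$ is a YES-instance. The converse reinserts $G_{\mu'}$, with its internal saturation, into the empty slot adjacent to $\mu$, re-creating the inert lens face by the same angle count.

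The hard part will be the budget bookkeeping in this reinsertion direction: I must guarantee that putting $G_{\mu'}$ back, together with the edges needed to saturate its interior switches, does not push the total above $k$. Since an optimal completion of $G'$ may already spend the full budget, this forces me to control the \emph{internal cost} of an empty-signature child, and the crux of the whole proof is to show that this cost is charged consistently whether or not $\mu'$ is present. This is exactly where the hypothesis that \emph{both} $\mu$ and $\mu'$ are empty and consecutive must be used: the retained twin $\mu$, with its identical directed-bundle interface and separated from $\mu'$ by the inert lens face, is what lets me relocate or account for the saturation of $G_{\mu'}$ without disturbing the poles, the other strands, or the uniqueness of the source and sink. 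Turning this ``twin-charging'' argument into a crossing-free, acyclic reinsertion that simultaneously preserves the upward angle assignment of the poles is the delicate point; the structural facts about directed bundles and inert faces established above are the tools I would rely on to make it precise.
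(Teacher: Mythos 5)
Your structural groundwork is sound and matches what the paper implicitly uses: an empty signature forces each half-boundary to be a directed path, acyclicity forces the two half-boundaries of $\mu'$ (and of $\mu$) to be oriented the same way, and the face $f^*$ between $\mu$ and $\mu'$ then has exactly two switch angles, at the poles, both forced to $-1$, so it is indeed ``inert.'' But both implications of your proof have the same genuine gap: you never handle saturating edges with \emph{exactly one} endpoint among the removed vertices $F$, and the missing tool is precisely the anchor-replacement argument of \cref{le:saturating}. In the forward direction you delete the interior of $G_{\mu'}$ ``together with exactly these internal saturating edges,'' but a solution for $G$ may also contain saturating edges drawn in the face on the \emph{far} side of $\mu'$ that saturate switches \emph{outside} $G_{\mu'}$ and merely use a (non-switch, transit) vertex of $B_{vu}$ of $\mu'$ as their second endpoint. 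Deleting $F$ destroys these edges and leaves those outside switches unsaturated; your normalisation step fixes only the poles' $+1$ angles, not these anchors. The paper's proof is essentially about exactly this case: since $B_{vu}$ of $\mu$ is also a monotone directed path (this is where the hypothesis that the retained twin $\mu$ has empty signatures and that neither node is a Q-node actually does its work) and becomes incident to the merged face after removal, the offending set $S$ can be replaced, as in \cref{le:saturating}, by an equal-size set $S'$ of edges incident to a single arbitrary vertex of $B_{vu}$ of $\mu$, saturating the same switches without crossings.

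In the backward direction you correctly sense a difficulty but then misdiagnose it and leave it unresolved by your own admission. The paper's converse is not a ``twin-charging'' argument at all; it is the mirror image of the forward rerouting: given a solution $E'$ for $G'$, the only edges that break when $G_{\mu'}$ is reinserted are those with exactly one endpoint on $B_{vu}$ of $\mu$, because in $G$ that path borders the inert face $f^*$, where no saturating edge can live; these edges are rerouted, again via the \cref{le:saturating} replacement, to an arbitrary vertex of $B_{vu}$ of $\mu'$, which borders the face they need, with no change in the number of added edges. (Your worry that reinserting $G_{\mu'}$ might require fresh budget for its interior is not unreasonable in the abstract, but the paper's proof reinstates no internal cost: in the regime where the lemma is invoked inside \cref{le:parallel}, the reinserted child contributes no unsaturated switches, and its boundary vertices are transit vertices.) As written, your text establishes the correct local angle analysis but proves neither implication; the concrete step to add in both directions is the relocation of boundary-anchored saturating edges between the two monotone half-boundaries $B_{vu}$ of $\mu$ and $B_{vu}$ of $\mu'$.
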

\begin{proof}
    Let $F$ be the set of vertices of $G_{\mu'}$ distinct from $u$ to $v$ (i.e., those removed when going from $G$ to $G'$). Also, suppose that the half-boundaries forming an inner face of $G_\nu$ are $B_{vu}$ of $\mu$ and $B_{uv}$ of $\mu'$.     
    Suppose first that $G$ admits a solution, namely we can add a set $E'$ of $k$ saturating edges to $G$ and turn it into an $st$-planar graph. If none of these edges is such that exactly one end-vertex belongs to $F$, then $E'$ is a solution also for $G'$. Otherwise, let $S \in E'$ be the set of edges having exactly one end-vertex in $F$. Observe that the end-vertices in $F$ all belong to the half-boundary $B_{vu}$ of $\mu'$. Hence, since neither $\mu$ nor $\mu'$ is a Q-node, analogously as in the proof of \cref{le:saturating}, we can replace $S$ with a set $S'$ of edges incident to an arbitrary vertex of the half-boundary $B_{vu}$ of $\mu$.  This yields a new set $E''$ of saturating edges that corresponds to a solution for $G'$.

    Suppose now that $G'$ admits a solution $E'$. If $E'$ does not contain any edge having exactly one end-vertex in $B_{vu}$ of $\mu$, then $E'$ is a solution for $G$. Otherwise, let $S \in E'$ be the set of edges having exactly one end-vertex in $B_{vu}$ of $\mu$. again, since neither $\mu$ nor $\mu'$ is a Q-node, analogously as in the proof of \cref{le:saturating}, we can replace $S$ with a set $S'$ of edges incident to an arbitrary vertex of the half-boundary $B_{vu}$ of $\mu'$.  This yields a new set $E''$ of saturating edges that corresponds to a solution for $G$.
\end{proof}

We are now ready to deal with P- and R-nodes. 

\begin{lemma}\label{le:parallel}
   Let $\nu$ be a P-node of $T$. The set of partial solutions of $\nu$ can be computed in $2^{O(k^2)} \cdot n$ time.
\end{lemma}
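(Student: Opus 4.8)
The plan is to compute the partial solutions of a P-node $\nu$ with poles $u,v$ by combining the partial solutions of its children across all relevant embeddings (i.e., all permutations of the edges of $\skel^-(\nu)$). The central difficulty is that a P-node may have arbitrarily many children, so we cannot afford to try every permutation directly; the running time would be exponential in the number of children rather than in $k$. The key to overcoming this is \cref{le:empty}: among the children whose pair of half-boundary signatures is empty, any two that would be placed consecutively are interchangeable, and in fact all but one such child can be contracted away without changing the answer. Thus I would first classify the children of $\nu$ into those with empty signature pairs and those with nonempty signature pairs. By \cref{le:short}, every child with a nonempty signature pair contributes at least one symbol $\sigma$, $\tau$, $\sigma_\ell$, or $\tau_\ell$ along the external boundary of $G_\nu$, and the total number of such symbols is bounded by $O(k)$; hence there are at most $O(k)$ children with nonempty signature pairs, and I may assume there is at most one child with an empty signature pair, leaving $O(k)$ relevant children in total. (Q-node children must be treated separately, as \cref{le:empty} excludes them; but each Q-node child also contributes a fixed single-edge half-boundary, and there can be at most $O(k)$ of them before the budget is exceeded.)

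Having reduced to $O(k)$ relevant children, I would enumerate the $O(k)! = 2^{O(k \log k)}$ permutations of these children around the poles, which is within the $2^{O(k^2)}$ budget. For each permutation, the children are arranged in parallel between $u$ and $v$, creating a sequence of inner faces of $G_\nu$, each bounded by a half-boundary of one child and a half-boundary of the consecutive child (together with the poles). For each such permutation I would iterate over all combinations of candidate tuples, one per relevant child, taken from the tables $X(\mu_i, \cdot)$ computed earlier; since each child has $2^{O(k)}$ relevant tuples and there are $O(k)$ children, this is again within $2^{O(k^2)}$. For a fixed choice of tuples I must (i) check feasibility of the combined internal assignment — in particular that the shared poles $u$ and $v$ do not receive a $+1$ angle more than once across the children and across the newly created inner faces, mirroring the flag check $b_{1,2}\vee b_{2,1}$ of \cref{le:series}; (ii) compute the cost contributed by each newly formed inner face of $G_\nu$ using \cref{le:saturating}, where the number $h$ of switch-angles on that face is $O(k)$ so the per-face cost is $2^{O(k^2)}$; and (iii) assemble the two output signatures $\Sigma_1,\Sigma_2$ for the half-boundaries $B_{uv}$ and $B_{vu}$ of $\nu$, using \cref{le:signature} to correctly convert symbols at bifacial-type poles into their $\ell$-variants when the corresponding angle is labeled $+1$. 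I would then verify both output signatures are short via \cref{le:short}, discard otherwise, and record the minimum cost into $X(\nu,\Sigma_1,\Sigma_2,b_u,b_v)$ with the appropriate pole flags.

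The main obstacle I expect is the bookkeeping around the poles $u$ and $v$ and the faces incident to them. Unlike the clean two-child S-node case, at a P-node the poles are shared by \emph{all} children simultaneously, and the $+1$/$-1$ angle labels that $u$ and $v$ make in each of the $O(k)$ inner faces created by the permutation must be globally consistent: each of $u,v$ may have at most one $+1$ angle over the entire $G_\nu$, and whether that $+1$ falls in an inner face of $G_\nu$ or must be reserved for the external face (to be contributed by the parent) determines the output flags $b_u,b_v$ and must be coordinated with property (iii) of an upward angle assignment on each internal face. I would handle this by treating the sequence of angles that each pole makes in the consecutive internal faces as an explicit part of the state during the permutation sweep, charging the cost of saturating $u$ or $v$ to a face via \cref{le:saturating} exactly when that pole makes its $+1$ there, and otherwise recording that the pole remains an active switch so the flag is set to $\true$.

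Putting the pieces together, for each node $\nu$ the dominant cost is the product of the number of permutations $2^{O(k\log k)}$, the number of tuple combinations $2^{O(k^2)}$, and the per-face saturation cost $2^{O(k^2)}$, all multiplied by an $O(n)$ factor accounting for the $O(n)$ vertices on the face boundaries that \cref{le:saturating} must scan and for the initial contraction step of \cref{le:empty}. This yields the claimed $2^{O(k^2)} \cdot n$ bound for computing the set of partial solutions of the P-node $\nu$.
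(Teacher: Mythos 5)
Your proposal follows the same architecture as the paper's proof (reduce to $O(k)$ relevant children, enumerate permutations and candidate-tuple combinations, guess the $+1$ angles per active face, compute per-face costs via \cref{le:saturating}, and assemble output signatures via \cref{le:signature}), but the crucial step --- bounding the number of children with non-empty signature pairs --- is justified incorrectly. You invoke \cref{le:short}, but that lemma only constrains the signatures of the half-boundaries of $\nu$ itself, i.e., the external face of $G_\nu$; in any permutation, all children except the two extreme ones have their half-boundaries lying on \emph{inner} faces of $G_\nu$, so \cref{le:short} says nothing about their symbols. Nor does the global bound of $2k+2$ switches suffice on its own: a child's signature can be non-empty while containing only $\sigma_\ell$/$\tau_\ell$ symbols, which correspond to \emph{local} switches --- non-switch vertices of $G$ whose number is not bounded by any function of $k$. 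The paper closes this hole with an angle-parity argument: in an internal assignment, every inner active face must satisfy property (iii), i.e., have two more $-1$ angles than $+1$ angles; since each pole contributes at most one $+1$ and at most two $0$ angles over all of $G_\nu$ (all its other angles being $-1$), at most six active faces are ``pole-special'', and every other active face with a non-empty signature must contain an angle labeled $+1$, hence a switch of $G$ making its $+1$ there. With at most $2k+2$ switches, each possibly bifacial, this yields at most $t=2(2k+2)+6=4k+10$ non-empty pairs. Without this argument your $O(k)$ bound on non-empty children is unsupported.

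A second genuine gap is your claim that ``all but one'' empty-pair child may be contracted away. \cref{le:empty} only lets you delete one of \emph{two consecutive} empty-pair (non-Q) children, so from an arbitrary embedding of a solution you can only collapse each maximal run of empty children to a single representative; with up to $t$ non-empty children there can be up to $t+1$ such runs acting as separators, which is why the paper retains $t+1$ empty pairs rather than one. Keeping a single empty child cannot represent embeddings in which empty children separate several pairs of non-empty children, and this changes the face structure --- which signatures share an active face and hence must jointly admit an upward assignment --- so both feasibility and cost can be computed wrongly. (A minor further point: your assertion that more than $O(k)$ Q-node children would ``exceed the budget'' is unsubstantiated, though it is essentially moot since a P-node of a simple acyclic digraph has at most one real edge.) The remainder of your argument --- the permutation sweep over the surviving children, the consistency checks at the poles, the per-face application of \cref{le:saturating}, and the $2^{O(k^2)}\cdot n$ accounting --- matches the paper's proof and would go through once the reduction step is repaired as above.
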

\begin{proof}
    Let $u$ and $v$ be the poles of $\nu$, and let $\mu_1,\mu_2, \dots, \mu_h$ be the $h \ge 2$ children of $\nu$. In order to compute the partial solutions for $\nu$, similarly as for S-nodes, we check whether sets of internal assignments of $G_{\mu_1}$, $G_{\mu_2}$, $\dots$, $G_{\mu_h}$ can be combined together. For each child $\mu_i$, let $\langle \Sigma_{1,i}, \Sigma_{2,i}, b_{1,i}, b_{2,i} \rangle$ be a candidate tuple.  Let $C = \sum_{i=1}^h X(\mu_i, \Sigma_{1,i}, \Sigma_{2,i}, b_{1,i}, b_{2,i})$.

    We first verify that $C \le k$, and that at most one flag $b_{1,i}$ is true, as well as at most one flag $b_{2,i}$ is true. The first condition guarantees that we have not exceeded our budget $k$, while the second condition guarantees that the poles $u,v$ shared by the children of $\nu$ do not receive the value $+1$ twice in the final internal assignment. If both conditions are satisfied, then we proceed as detailed below, otherwise we discard the set of candidate tuples. 

    Observe that $h$ might be unbounded with respect to $k$, thus we cannot afford to enumerate all possible permutations of the edges of $\skel^-(\nu)$. To overcome this issue, we make the following crucial observations. First, we know that $G$ contains at most $2k+2$ switches, otherwise we can safely reject the instance. Consequently, at most $2k+2$ children of  $\nu$ may contain switches different from $u$ and $v$ in their pertinent graphs.  Second, consider now a permutation of the edges of $\skel^-(\nu)$ and the corresponding planar embedding of $G_\nu$. Up to a relabeling of the children, we shall assume that the half-boundary $B_{vu}$ of $\mu_i$ and the half-boundary $B_{uv}$ of $\mu_{i+1}$ form a face $f_i$ of $G_\nu$, for $i=1,\dots,h-1$, and that the external face $f_0$ of $G_\nu$ consists of $B_{uv}$ of $\mu_1$ and $B_{vu}$ of $\mu_h$; see \cref{fig:parallel}. Observe now that each of $u$ and $v$ can contribute at most one angle labeled $+1$ and at most two angles labeled $0$; all other angles at $u$ and $v$ must be labeled  $-1$. Hence, besides the at most six faces in which $u$ or $v$ contribute an angle labeled $+1$ or $0$, all other faces are such that they either contain an angle labeled  $+1$, or all their angles (except those formed by $u$ and $v$) are labeled $0$. Therefore, the number of faces whose half-boundaries have non-empty signatures is at most $t=2(2k+2)+6=4k+10$ (a switch vertex may be bifacial and hence belong to two half-boundaries).  Putting all together, if there exist more than $t$ pairs that are not empty, then we can safely discard the set of candidate tuples. 

    \begin{figure}[t]
        \centering
        \subcaptionbox{\label{fig:parallel}}{\includegraphics{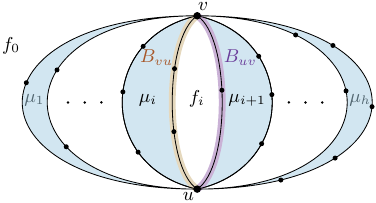}}
        \hfil
        \subcaptionbox{\label{fig:rigid}}{\includegraphics{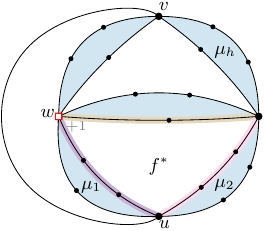}}
        \caption{Illustration for the proof of (a) \cref{le:parallel} and (b) \cref{le:rigid}.}        
    \end{figure}

    Based on the previous observations, we will now assume to have at least $h-t$ empty pairs. Furthermore, if $h>2t+2$, at least two children are such that \cref{le:empty} holds for them. Consequently,  removing all empty pairs except $t+1$ preserves the existence of a solution (if any). Therefore, we shall further assume that we have $h \in O(t) \in O(k)$ pairs of signatures, and we can now enumerate all possible permutations of such pairs, and hence all possible putative planar embeddings described by $\skel^-(\nu)$. 

    Consider now a fixed permutation.  Following the same notation as before, assume that the half-boundary $B_{vu}$ of $\mu_i$ and the half-boundary $B_{uv}$ of $\mu_{i+1}$ form a face $f_i$ of $G_\nu$, for $i=1,\dots,h$. We call such faces \emph{active}. If all values $b_{1,i}$ are true and $u$ is a switch, we guess whether $u$ has an angle labeled $+1$ in some active face $f_i$ or not. In the former case, we set flag $b_1$ to false and also guess which active face the angle belongs to, in the latter case we set $b_1$ to true.  We do the same for $v$ and flag $b_2$. 
     
    Next, consider a non-empty signature containing a symbol $\sigma$ or $\tau$. Let $w$ be the vertex corresponding to that symbol. If $w$ is not bifacial, the active face in which it forms the $+1$ angle is unique, otherwise we must guess in which of the two active faces sharing $w$ the $+1$ angle is assigned to. After doing this procedure for all such symbols, we have exhaustively branched over the $2^{O(k)}$ angle assignments for the active faces. For each such angle assignment we can check, in $O(k)$ time, whether it is an upward assignment for each active face. If not, we discard the angle assignment, otherwise we now have an internal assignment $\lambda$ of $G_\nu$.
    
    Next, for each active face $f_i$, we can  apply \cref{le:saturating} to compute the minimum number  $c_i$ of saturating edges needed to saturate all switches in $f_i$. Let $C+\sum_{i=1}^{h}c_i$ be the cost of the internal assignment $\lambda$. If it is larger than $k$, the angle assignment is discarded. 

    We are now ready to construct the signatures $\Sigma_1$ and $\Sigma_2$ of the half-boundaries $B_{uv}$ and $B_{vu}$ of $\nu$. 
    Since the half-boundary $B_{uv}$ of $\nu$ coincides with $B_{uv}$ of $\mu_1$ (as fixed by the permutation at hand), we can invoke  \cref{le:signature} by using $\lambda$ and $\Sigma_{1,1}$ as arguments. Similarly, the signature $\Sigma_2$ is computed invoking \cref{le:signature} with arguments $\lambda$ and $\Sigma_{2,h}$. 
    Observe that both $\Sigma_1$ and $\Sigma_2$ are short, because $\Sigma_{1,1}$ and $\Sigma_{2,h}$ are short. Then we set $X(\Sigma_1,\Sigma_2,b_1,b_2) = \min \{ X(\Sigma_1,\Sigma_2,b_1,b_2),C+\sum_{i=1}^{h}c_i\}$; taking the minimum is needed because different permutations, as well as different angle assignments of the same permutation, may yield the same pair of signatures and flags but different costs.

    Putting all together, it suffices to first branch over sets of candidate tuples of size $h \in O(k)$, for each set we branch over $k^{O(k)}$ permutations, and for each permutation we further branch over the $2^{O(k)}$ possible angle assignments of the active faces. Computing the cost of an internal assignment takes $2^{O(k^2)} \cdot n$ time by using \cref{le:saturating}.
 \end{proof}

\begin{lemma}\label{le:rigid}
    Let $\nu$ be an R-node of $T$. The set of partial solutions of $\nu$ can be computed in $2^{O(k^2)} \cdot n$ time.
\end{lemma}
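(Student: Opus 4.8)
The plan is to follow the scheme of \cref{le:parallel}, replacing the enumeration of permutations by the fact that a triconnected skeleton has only two embeddings. Since $\skel(\nu)$ is triconnected, $G_\nu$ has exactly two planar embeddings with the reference edge on the outer boundary, obtained by flipping $\skel^-(\nu)$ at its poles $u,v$; I would execute the whole routine for each of them and keep the better records. Fixing one embedding fixes the faces of $G_\nu$: they are in bijection with the faces of $\skel(\nu)$, and since the skeleton is a simple triconnected graph every such face is a cycle of length at least three, hence bounded by the concatenation of at least three half-boundaries of children that meet pairwise at skeleton vertices. As before, I would branch over a candidate tuple $\langle\Sigma_{1,i},\Sigma_{2,i},b_{1,i},b_{2,i}\rangle$ for the children, discard whenever $C=\sum_i X(\mu_i,\dots)$ exceeds $k$, and forbid any skeleton vertex from being assigned $+1$ more than once.

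The first task is to bound the work despite the number $h$ of children being possibly large, keeping in mind that, because the skeleton is rigid, \cref{le:empty} cannot be applied to delete children. The key reduction is to classify the skeleton vertices by their status in $G$: a skeleton vertex that is a non-switch of $G$ makes exactly two $0$-angles and is forced to be a local switch (hence a $-1$ angle) in all its other incident faces, so its flags are determined and it never produces a $+1$; only the switches of $G$ in $G_\nu$, of which there are at most $2k+2$, can carry a $+1$ angle. Consequently the only faces that can contribute to the cost are those carrying a $+1$ angle, and by property~(iii) every face with at least three $-1$ angles already carries one; I would argue that the number of faces relevant for the cost, together with their bounding children, is $O(k)$, while every other face is cost-free under \cref{le:saturating} and its forced $-1$ labeling requires no branching. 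The remaining children, which border only cost-free faces, contribute solely through their $X$-values, which I would aggregate by choosing, for each, the cheapest admissible flag combination subject to the per-vertex $+1$ constraint.

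With the $O(k)$ relevant children and faces isolated, I would place the $+1$ angles exactly as in \cref{le:parallel}: for each symbol $\sigma$ or $\tau$ I branch over which incident face receives its $+1$, and for each switch skeleton vertex I branch over which incident face carries its $+1$. Since the relevant signatures are short, there are $O(k^2)$ such symbols and the branching is $2^{O(k^2)}$; for every placement I verify in $O(k)$ time per face that the labeling is an upward angle assignment for that face, thereby obtaining an internal assignment $\lambda$ of $G_\nu$. I would then evaluate the per-face saturating cost through \cref{le:saturating} and compute the signatures $\Sigma_1,\Sigma_2$ of the half-boundaries of $\nu$ through \cref{le:signature}, updating $X(\nu,\Sigma_1,\Sigma_2,b_u,b_v)$ by a minimum, as different embeddings, placements, or tuple assignments may yield the same record at different cost.

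The main obstacle I anticipate lies precisely in the combination step at the shared skeleton vertices. Unlike in \cref{le:parallel}, where all children meet only at the two poles $u,v$, here consecutive half-boundaries along a face meet at internal skeleton vertices that may have large degree and bound arbitrarily many faces; a high-degree switch can therefore make its $+1$ angle in any of a large set of candidate faces. The crux is to prove that this placement can be restricted to a bounded family of candidate faces without losing optimality — intuitively, consecutive cost-free faces around a vertex are interchangeable in the sense of \cref{le:empty}, and empty boundaries contribute neither to the cost nor to the external signatures — so that the branching stays $2^{O(k^2)}$ rather than $n^{O(k)}$. Combining the two embeddings, the $2^{O(k)}$ tuple assignments on the $O(k)$ relevant children, the $2^{O(k^2)}$ angle placements, and the $2^{O(k^2)}\cdot n$ cost evaluation of \cref{le:saturating} gives the claimed $2^{O(k^2)}\cdot n$ running time.
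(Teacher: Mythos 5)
Your overall architecture matches the paper's: branch over candidate tuples for the children, check the budget and per-skeleton-vertex flag consistency, restrict attention to $O(k)$ children with non-empty signatures, guess the placement of the $+1$ angles, verify upwardness face by face, evaluate the saturation cost via \cref{le:saturating}, assemble $\Sigma_1,\Sigma_2$ via \cref{le:signature}, and handle the rigid skeleton's two embeddings by swapping the signature pair. But you leave the one genuinely new step of the R-node case unproven, and you say so yourself: bounding the set of faces in which a (possibly high-degree) switch vertex $w$ of $\skel(\nu)$ may place its $+1$ angle. Your proposed remedy --- that consecutive cost-free faces around $w$ are ``interchangeable in the sense of \cref{le:empty}'' --- is not the right mechanism and would not go through as stated. \cref{le:empty} deletes a child of a P-node sitting between two empty siblings; there is no analogous exchange move on the faces of a rigid skeleton, and more importantly, moving the $+1$ of $w$ from one face to another changes which face must satisfy property (iii). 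A face whose other angles are all $0$ --- which is exactly what a ``cost-free'' face incident to $w$ typically looks like --- cannot receive a $+1$ at all: an inner face containing a $+1$ angle must contain at least three angles labeled $-1$.

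The paper turns precisely this last observation into the missing bound, using infeasibility rather than interchangeability: since at most $2k+2$ signature pairs are non-empty, at most $4k+4$ active faces touch a non-empty signature; in every other active face the only possible source of $-1$ angles are the vertices of $\skel(\nu)$; hence if $w$ is incident to more than $4k+5$ active faces having more than two $-1$ angles, the whole set of candidate tuples can be safely discarded, and otherwise $w$ has only $O(k)$ feasible faces for its $+1$ angle, which yields the bounded branching and, combined with \cref{le:saturating}, the $2^{O(k^2)}\cdot n$ bound. Two secondary gaps in your write-up feed into this: you never establish the paper's fact that a child whose pertinent graph contains no switch must have both signatures empty in any solution (its pertinent graph must be $st$-planar with its two poles as the only switches), which is what licenses aggregating the remaining children purely through their $X$-values; and your estimate that the branching involves only $O(k^2)$ symbols silently assumes the per-vertex face bound you did not prove. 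So the skeleton of your argument is the paper's, but the crux --- the pruning of candidate faces at skeleton switches --- is missing, and the intuition you offer in its place points in the wrong direction.
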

\begin{proof}
    Let $u$ and $v$ be the poles of $\nu$, and let $\mu_1,\mu_2, \dots, \mu_h$ be the $h \ge 2$ children of $\nu$. For each child $\mu_i$, let $\langle \Sigma_{1,i}, \Sigma_{2,i}, b_{1,i}, b_{2,i} \rangle$ be a candidate tuple.  Let $C = \sum_{i=1}^h X(\mu_i, \Sigma_{1,i}, \Sigma_{2,i}, b_{1,i}, b_{2,i})$.

    We first verify that $C \le k$, in order to avoid exceeding the budget. Next, we check the consistency of the flags. Recall that the vertices of $\skel(\nu)$ are the poles of the children of $\nu$. Namely, for each vertex $w$ of $\skel(\nu)$, we verify that at most one flag corresponding to it is false. If these conditions are met we proceed as detailed below, otherwise we discard the set of candidate tuples. 

    We now make important observations concerning the number of interesting children of $\nu$. As in the proof of \cref{le:parallel}, we can observe that at most $2k+2$ children of  $\nu$ may contain switches different from $u$ and $v$ in their pertinent graphs. Now consider a child $\mu$ of $\nu$ that does not contain switches in its pertinent graph $G_\mu$, and let $u_\mu$ and $v_\mu$ be its poles. If $G$ admits a solution, one immediately verifies that $G_\mu$ is $st$-planar and its two switches are $u_\mu$ and $v_\mu$. Consequently, in any solution, the two signatures $\Sigma_{u_\mu v_\mu}$ and $\Sigma_{v_\mu u_\mu}$ must be empty. Based on this property, it suffices to consider sets of pairs of signatures in which at most $2k+2$ pairs are not empty.  

    Next, following the lines of the proof of \cref{le:parallel}, consider a non-empty signature containing a symbol $\sigma$ or $\tau$. Let $w$ be the vertex corresponding to that symbol. If $w$ is not bifacial, the face in which it forms the $+1$ angle is unique, otherwise we must guess in which of the two faces sharing $w$ the $+1$ angle is assigned to. This is however not enough for R-nodes. Namely, observe that each face $f^*$ of  $\skel(\nu)^-$ corresponds to a face $f$ of $G_\nu$ whose boundary is formed by one half-boundary for each child of $\nu$ represented by an edge of $f^*$ (which can be a real edge or a virtual edge); see \cref{fig:rigid}. We call such faces \emph{active} in the following. Moreover, the only angles that are not yet defined are those made by the vertices of $\skel(\nu)$ that are switches and whose corresponding flags are all true. For these vertices we shall guess in which active face they make their $+1$ angle.  Clearly, any such a vertex $w$  belongs to multiple active faces (possibly including the external face). On the other hand, for an active face to be able to absorb a $+1$ angle, it must contain at least three angles labeled $-1$. Since we have at most $2k+2$ non-empty pairs, there are at most $4k+4$ active faces formed by non-empty signatures. For the other active faces, the only source of $-1$ angles are the vertices of $\skel(\nu)$. Consequently, if $w$ is incident to more than $4k+5$ active faces in which the number of angles labeled  $-1$ is larger than $2$, we can safely discard the set of candidate tuples. Putting all together, for each vertex $w$ we can branch over its $O(k)$ interesting active faces to decide in which of them it will make its $+1$ angle. This procedure leads to $2^{O(k)}$ angle assignments for the active faces. For each such angle assignment we can check, in $O(k)$ time, whether it is an upward assignment for each of the active faces. If not, we discard the angle assignment, otherwise we now have an internal assignment $\lambda$ of $G_\nu$.

    Next, for each active face $f_i$, we can  apply \cref{le:saturating} to compute the minimum number  $c_i$ of saturating edges needed to saturate all switches in $f_i$. Let $C+\sum_{i=1}^{h}c_i$ be the cost of the internal assignment. If it is larger than $k$, the angle assignment is discarded.

    We are now ready to construct the signatures $\Sigma_1$ and $\Sigma_2$ of the half-boundaries $B_{uv}$ and $B_{vu}$ of $\nu$. 
    Observe that the embedding of $\skel(\nu)$ if fixed up to a flipping operation, which corresponds to inverting the two signatures. Therefore, we construct $\Sigma_1$ and $\Sigma_2$ as follows. Let $\Sigma'_i$, for $i=1,\dots,r$ be the $r \ge 1$ signatures of the half-boundaries of the children of $\nu$ that form the half-boundary $B_{uv}$ of $\nu$, in the order they are encountered from $u$ to $v$. Also let $w_i$, $i=1,\dots,r-1$ be the vertices of $\skel(\nu)$ that belong to $B_{uv}$. We initialize $\Sigma_1$ with the signature obtained by invoking \cref{le:signature} with arguments $\lambda$ and $\Sigma'_1$. For vertex $w_1$, we distinguish whether it is a switch of $G$ or not. In the former case,  we concatenate the symbol $\sigma$ ($\tau$) if none of its angles in $G_\nu$ is labeled as $+1$, otherwise we concatenate $\sigma_\ell$ ($\tau_\ell$). In the latter case, consider the two edges incident to $w_1$ along $B_{uv}$. If one edge is incoming and the other is outgoing, then we do not append any symbol. If both edges are outgoing (incoming), we append $\sigma_\ell$ ($\tau_\ell$). We then repeat the procedure for the remaining signatures and vertices. The signature $\Sigma_2$ is computed analogously. Once both $\Sigma_1$ and $\Sigma_2$ have been computed, we verify that each of them is short (a necessary condition by \cref{le:short}), otherwise we reject the set of candidate tuples. Concerning the flags,  $b_1$ ($b_2$) is true if and only if all flags corresponding to $u$ ($v$) are true and none of its angles in the active faces is labeled $+1$ according to $\lambda$.   Finally we set $X(\Sigma_1,\Sigma_2,b_1,b_2) = \min \{ X(\Sigma_1,\Sigma_2,b_1,b_2),C+\sum_{i=1}^{h}c_i\}$, as well as $X(\Sigma_2,\Sigma_1,b_2,b_1) = \min \{ X(\Sigma_2,\Sigma_1,b_2,b_1),C+\sum_{i=1}^{h}c_i\}$. 

    Putting all together, it suffices to first branch over sets of candidate tuples of size $h \in O(k)$, for each set we branch over the $2^{O(k)}$ possible angle assignments of the active faces. Computing the cost of an internal assignment takes $2^{O(k^2)} \cdot n$ time by using \cref{le:saturating}.
\end{proof}

It remains to deal with the root $\rho$ of $T$. Recall that $G_\rho=G$, and that $\rho$ is a Q-node. 

\begin{lemma}
\label{le:fixedrefedge}
    Let $G$ be an $n$-vertex biconnected digraph, let $e$ be an edge of $G$, and let $k \in \mathrm{N}$. There exists an  algorithm that decides, in $O(2^{O(k^2)} \cdot n)$ time, whether $G$ can be augmented to an  $st$-planar graph with $e$ on its external face by adding at most $k$ edges.
\end{lemma}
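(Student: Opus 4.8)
The plan is to root the SPQR-tree $T$ of $G$ at the Q-node $\rho$ representing the prescribed edge $e$, so that $e$ lies on the boundary of the external face in every planar embedding of $G$ encoded by $T$ (see \cref{se:overview}). After rejecting outright any instance with more than $2k+2$ switches, I would compute the table $X$ by a single bottom-up traversal of $T$: the leaves are initialized as described for Q-nodes, and the sets of partial solutions of the internal nodes are obtained by invoking \cref{le:series} at S-nodes, \cref{le:parallel} at P-nodes, and \cref{le:rigid} at R-nodes. This fills $X(\nu,\cdot)$ for every node $\nu\neq\rho$.

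It remains to process the root $\rho$, which is where the actual decision is made and which I expect to be the main obstacle. Let $\eta$ be the unique child of $\rho$; its pertinent graph is $G_\eta=G-e$ and its poles $u,v$ are the end-vertices of $e$. Adding $e$ back splits the external face of $G_\eta$ into the true external face $f_0$ of $G$ (bounded by $e$ and one half-boundary of $\eta$) and a single new inner face $f'$ (bounded by $e$ and the other half-boundary). For each candidate tuple $\langle\Sigma_1,\Sigma_2,b_u,b_v\rangle$ with $X(\eta,\Sigma_1,\Sigma_2,b_u,b_v)\le k$, I would complete the internal assignment of $G_\eta$ to an upward angle assignment of all of $G$: the only angles still undetermined are those of the active switches recorded by the symbols $\sigma,\tau$ of $\Sigma_1,\Sigma_2$ (each such switch must place its unique $+1$ angle either in $f_0$ or in $f'$, according to which half-boundary it lies on) together with the angles of the poles $u,v$, whose status is governed by $b_u,b_v$ and the orientation of $e$. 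Since both signatures are short, there are only $2^{O(k)}$ completions to branch over; for each I would verify that the completed assignment is upward (each switch receives exactly one $+1$ angle, and $f'$ satisfies the face condition (iii)) and then charge the saturating edges of $f'$ via \cref{le:saturating}.

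The subtle point is the treatment of $f_0$: an $st$-planar graph has $f_0$ bounded by two directed $s$--$t$ paths, so exactly one source $s$ and one sink $t$ survive, both carrying a $+1$ angle in $f_0$, while every other switch incident to $f_0$ must be removed. I would therefore guess which source $s$ and which sink $t$ survive, among the $O(k)$ switches having a $+1$ angle in $f_0$, and apply a variant of \cref{le:saturating} to $f_0$ that saturates all of its $+1$-switches except $s$ and $t$; together with the face condition (iv), this certifies that $f_0$ is a legal external face. A completion is feasible exactly when the sum of $X(\eta,\cdot)$ and the saturating edges charged to $f'$ and to $f_0$ is at most $k$, and the algorithm answers \textsc{yes} precisely when some completion is feasible. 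Correctness follows by induction over $T$ — each of \cref{le:series,le:parallel,le:rigid} is the inductive step asserting that $X$ records the minimum cost of an internal assignment with the prescribed interface, with the Q-node case as the base — combined with the fact that the root step realizes exactly the additional constraints that distinguish an $st$-planar graph (a single source and a single sink on the external face); acyclicity is inherited for free, since any feasible completion is upward planar.

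For the running time, I would observe that $T$ has $O(n)$ nodes, that the combinatorial branching performed at each node (and at the root) is bounded by $2^{O(k^2)}$, and that the only linear factors come from the face-boundary traversals of \cref{le:saturating}, whose sizes sum to $O(n)$ over all skeletons of $T$. Hence the whole traversal together with the root step runs in $O(2^{O(k^2)}\cdot n)$ time, as claimed. The hardest part, as noted, is the root step: one must argue that leaving exactly one source and one sink unsaturated on $f_0$ — and saturating the remaining switches in place there, as justified by the rerouting argument underlying \cref{le:saturating} — yields a genuine $st$-planar augmentation, and, conversely, that every $st$-planar augmentation with $e$ on the external face arises in this way from an internal assignment counted by $X$.
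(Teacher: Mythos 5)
Your proposal is correct and follows essentially the same route as the paper's proof: root the SPQR-tree at the Q-node of $e$, fill $X$ bottom-up via \cref{le:series,le:parallel,le:rigid}, and at the root branch over the $2^{O(k)}$ ways the remaining $\sigma,\tau$ switches and the poles place their $+1$ angles in the two faces incident to $e$, charging saturation costs via \cref{le:saturating} and its adjusted variant for the external face (your explicit guess of the surviving source $s$ and sink $t$ is exactly the ``simple adjustment'' the paper leaves implicit). The only substantive refinements are on your side: you spell out the summation of face-boundary sizes to $O(n)$ across the traversal, which the paper's time analysis glosses over, while conversely you leave the bifacial case at the root (a vertex lying on both half-boundaries when the root's child is an S-node, requiring a two-way guess between $f_0$ and $f'$) slightly underspecified, though your $2^{O(k)}$ branching absorbs it.
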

\begin{proof}
    By using \cref{le:series,le:parallel,le:rigid} we can traverse $T$ bottom up until reaching the root $\rho$. Let $u,v$ be the end-vertices of $e$, and therefore the poles of both $\rho$ and of its child $\xi$. Consider each pair of signatures $\Sigma'_1$, $\Sigma'_2$ and of flags $b_1,b_2$ such that $C = X(\xi,\Sigma'_1,\Sigma'_2,b_1,b_2) \le k$. Let $f_0$ and $f_1$ be the two faces of $G_\rho=G$ that share edge $e$, with $f_0$ being the external face. Without loss of generality, assume that the boundary of $f_0$ is formed by edge $e$ and by the half-boundary $B_{vu}$ of $\xi$, while the boundary of $f_1$ is formed by $e$ and the half-boundary $B_{uv}$ of $\xi$.  Each symbol $\sigma$ or $\tau$ in $\Sigma_1$ whose corresponding vertex is not bifacial must contribute with an angle labeled  $+1$ in $f_1$. On the other hand, a symbol $\sigma$ or $\tau$ in $\Sigma'_1$ whose corresponding vertex is bifacial must contribute with an angle labeled  $+1$ in one of $f_1$ or $f_0$. For such vertices we guess in which of the two faces they contribute the $+1$ angle. We proceed analogously for $\Sigma'_2$. The same reasoning applies to $u$ ($v$) if $b_1$ ($b_2$) is true. This leads to an $2^{O(k)}$ angle assignments for $f_0$ and $f_1$. 
    For each angle assignment we can test, in $O(k)$ time, whether it is upward for the two faces $f_0$ and $f_1$. 
    
    An angle assignment of $f_0$ and $f_1$ that is upward for them, together with the internal assignment represented by $X(\xi,\Sigma'_1,\Sigma'_2,b_1,b_2)$, implies the existence of an upward angle assignment of $G$. Also, by \cref{le:saturating} we can compute in $2^{O(k^2)}\cdot n$ time the cost of saturating $f_1$. For face $f_0$ we need to saturate all switches except one source and one sink, hence its cost $c_0$ can be computed by a simple adjustment of the procedure of \cref{le:saturating}. 
    If $C^* = C+c_0+c_1 \leq k$, then we have constructed an upward angle assignment of $G$, and, in particular, all switches of $G$ can be saturated with at most $k$ saturating edges, except for a single source and a single sink on the external face. Then we can conclude that $G$ is a YES instance and the algorithm reports a positive answer. 

    After considering all pairs $\Sigma'_1$, $\Sigma'_2$ and all pairs $b_1,b_2$, as well as all angle assignments for the corresponding faces $f_0$ and $f_1$, if no positive answer was returned, then the algorithm halts and rejects the instance.
\end{proof}

The proof of   \cref{th:main-bic} follows by applying  \cref{le:fixedrefedge} for each of the $O(n)$ edges of $G$. 

\section{Discussion and Open Problems}\label{se:open}

We showed that \stshort can be solved in $2 ^{O(k^2)} \cdot n^2$  time for biconnected digraphs. It is worth remarking that, while in principle the \stshort problem needs not to be restricted to biconnected digraphs (for which it is already NP-hard), considering simply connected graphs would make the proof of our result more technical but not more interesting. In fact, one can simply decompose the graph into its biconnected components through a block-cutvertex tree and work with similar boundary conditions as those we already considered. More interestingly, we ask whether \stshort belongs to the FPL (fixed parameter linear) class. On a similar note, improving the exponential function (or proving that it is asymptotically optimal under standard assumptions) would also be interesting. Lastly, it remains open whether \stshort admits a kernel of polynomial size.

\bibliographystyle{plainurl}
\bibliography{bibliography}

\end{document}